\newcommand{\rx}{\mathsf{x}}							
\newcommand{\ry}{\mathsf{y}}							
\newcommand{\rs}{\mathsf{s}}							
\newcommand{\rz}{\mathsf{z}}							
\newcommand{\rzs}{(\mathsf{z}_t)_{t\in\NN}}				
\newcommand{\rss}{(\mathsf{s}_t)_{t\in\NN}}				
\newcommand{\rxs}{(\mathsf{x}_t)_{t\in\NN}} 			
\newcommand{\rys}{(\mathsf{y}_t)_{t\in\NN}}				
\newcommand{\rshs}{(\hat{\mathsf{s}}_t)_{t\in\NN}}		
\let\ParSym\S
\newcommand{\itNr}[1]{\ParSym#1}
\newcommand{\E}{\mathcal{E}}							
\newcommand{\D}{\mathcal{D}}							
\newcommand{\X}{\mathcal{X}}							
\newcommand{\Y}{\mathcal{Y}}							
\newcommand{\W}{\mathcal{W}}							
\newcommand{\U}{\mathcal{U}}							
\renewcommand{\S}{\mathcal{S}}							
\newcommand{\NN}{\mathbb{N}}							
\newcommand{\RR}{\mathbb{R}}							
\newcommand{\mA}{\bm{\mathrm{A}}}						
\newcommand{\mB}{\bm{\mathrm{B}}}						
\newenvironment{RV}	{\noindent\color{red}\textbf{REVISE:}~}%
					{\hfill\mbox{}\par}
\newtheorem{Definition}{Definition}						
\newtheorem{Lemma}{Lemma}								
\newtheorem{Theorem}{Theorem}							
\newtheorem{Remark}{Remark}								
\tikzset{	>=stealth}
\newcommand{\sender}{		\begin{tikzpicture}		\draw [domain=-45:45]	plot ({cos(\x)}, {sin(\x)});
													\draw [domain=-45:45] 	plot ({1.5*cos(\x)}, {1.5*sin(\x)});
													\draw [domain=-45:45] 	plot ({2*cos(\x)}, {2*sin(\x)});
													\draw [domain=135:225] 	plot ({cos(\x)}, {sin(\x)});
													\draw [domain=135:225] 	plot ({1.5*cos(\x)}, {1.5*sin(\x)});
													\draw [domain=135:225] 	plot ({2*cos(\x)}, {2*sin(\x)});
													\draw[fill = black] 	(0,0) circle[radius = 0.5cm];
							\end{tikzpicture}}
\newcommand{\mobileagent}{	\begin{tikzpicture}		\draw (0,2.5) node[anchor = center] {\sender};
													\fill[color = black] (-0.15, 2.5) rectangle ( 0.15, -1);
													\fill[color = black!35] (0,-0.5) -- (1.5,-0.5) -- (2,-1) --%
																			(5.5,-1) -- (6,-1.5) -- (5.5,-2) --%
																			(3.5, -2) -- (3,-2.5) -- (1.5,-2.5) --%
																			(1,-3) -- (-1,-3) -- (-1.5,-2.5) --%
																			(-3,-2.5) -- (-3.5, -2) -- (-5.5,-2) --%
																			(-6,-1.5) -- (-5.5,-1) -- (-2,-1) --%
																			(-1.5,-0.5) -- cycle;	%
													\fill[color = black]	(1.5,-2.5) -- (2.5, -3.5) --%
																			(2.5, -5.5) -- (3.5, -5.5) --%
																			(3.5, -3.5) -- (2.5, -2.5) -- cycle; %
													\fill[color = black]	(-1.5,-2.5) -- (-2.5, -3.5) --%
																			(-2.5, -5.5) -- (-3.5, -5.5) --%
																			(-3.5, -3.5) -- (-2.5, -2.5) -- cycle; %
													\fill[color = black!35] (4,-1) -- (4.5,-0.5) -- (4.5, 0.5) --%
																			(4.75,1) -- (5,0.5) -- (5,-0.5) --%
																			(5.5, -1) -- cycle; %
													\fill[color = black] 	(1,0.25) -- (1.25, 0.5) --%
																			(8.25, 0.5) -- (8.5, 0.25) --%
																			(8.25, 0) -- (1.25, 0) -- cycle; %
													\fill[color = black!35] (-4,-1) -- (-4.5,-0.5) --%
																			(-4.5, 0.5) -- (-4.75,1) -- (-5,0.5) --%
																			(-5,-0.5) -- (-5.5, -1) -- cycle; %
													\fill[color = black] 	(-1,0.25) -- (-1.25, 0.5) --%
																			(-8.25, 0.5) -- (-8.5, 0.25) --%
																			(-8.25, 0) -- (-1.25, 0) -- cycle;
							\end{tikzpicture}}
\newcommand{\basestation}{	\begin{tikzpicture} 	\draw[thick, color = black] (0,0.1) -- (0,0.6);
													\draw (0,0.6) 				node[scale = 0.15] {\sender};
													\fill[color = black!35] 	(-0.2, 0.2) -- (0, 0) --%
																				(0.2,0.2) -- cycle;
							\end{tikzpicture}}
\newcommand{\physNetw}{		\begin{tikzpicture}		\draw (-2.4,-0.6) 	node[anchor = center] {\basestation};
													\draw (2.4,0.6) 	node[anchor = center] {\basestation};
													\draw (-1.8,0.6) 	node[anchor = center, scale = 0.1] 
																			{\mobileagent};
													\draw (0,0) 		node[anchor = center, scale = 0.1] 
																			{\mobileagent};
													\draw (1.8,-0.6) 	node[anchor = center, scale = 0.1] 
																			{\mobileagent};
							\end{tikzpicture}}
\newcommand{\virtNetw}{		\begin{tikzpicture}		\draw (-3,-1) 	node[draw, thick, fill = black!35, 
																		anchor = center] {\(\mathtt{BASE}~1\)};
													\draw (3,1) 	node[draw, thick, fill = black!35, 
																		anchor = center] {\(\mathtt{BASE}~2\)};
													\draw (-2,0.8) 	node[draw, thick, fill = black!35, 
																		anchor = center] {\(\mathtt{MOBILE}~1\)};
													\draw (0,0) 	node[draw, thick, fill = black!35, 
																		anchor = center] {\(\mathtt{MOBILE}~2\)};
													\draw (2,-0.8) 	node[draw, thick, fill = black!35, 
																		anchor = center] {\(\mathtt{MOBILE}~3\)};
													\draw[very thick, shorten >= 0mm, shorten <= 0mm, <->] 
														(-1.1,0.8) -- (0,0.35);
													\draw[very thick, shorten >= 4mm, shorten <= 4mm, <->] 
														(-3,-1) -- (-2,0.8);
													\draw[very thick, shorten >= 7.5mm, shorten <= 9mm, <->] 
														(0,0) -- (-3,-1);
													\draw[very thick, shorten >= 0mm, shorten <= 0mm, <->] 
														(1.1,-0.8) -- (0,-0.35);
													\draw[very thick, shorten >= 4mm, shorten <= 4mm, <->] 
														(3,1) -- (2,-0.8);
													\draw[very thick, shorten >= 7.5mm, shorten <= 9mm, <->] 
														(0,0) -- (3,1);
							\end{tikzpicture}}
\newcommand{\TikZDT}{		\begin{tikzpicture}		[scale=0.8, every node/.style={scale=0.78}]
													\begin{scope}	[xslant = 0.5]
																	\draw[rounded corners = 1mm, 
																			fill = black!15!white] 
																			(-3, 1.25) rectangle (3, -1.25);
																	\draw 	(0,0) node (pCenter) {};
																	\draw 	(-3, -1.25) + (0.2,0.2) node (A) {};
																	\draw 	(-3, 1.25) + (0.2,-0.2) node (B) {};
																	\draw 	(3, -1.25) + (-0.2,0.2) node (C) {};
																	\draw 	(3, 1.25) + (-0.2,-0.2) node (D) {};
																	\draw 	(0,-1.25) node (pText) {};
																	\draw 	(-4.85,0.2) node (TMBoxLower) {};
																	\draw 	(-3.1, 0.2) node (ArrowPointLower) {};
													\end{scope}
													\draw 	(pCenter.center) node [anchor = center, scale = 0.8] 
															{\physNetw};
													\draw[-stealth, thick, color = black!35!white, 
														densely dashed] (A.center) -- +(0,2.9);
													\draw[-stealth, thick, color = black!35!white, 
														densely dashed] (B.center) -- +(0,2.9);
													\draw[-stealth, thick, color = black!35!white, 
														densely dashed] (C.center) -- +(0,2.9);
													\draw[-stealth, thick, color = black!35!white, 
														densely dashed] (D.center) -- +(0,2.9);
													\draw 	(pText) node [xslant = 0.5, anchor = south] 
															{Physical Plane};
													\begin{scope}	[yshift = 80, xslant = 0.5]
																	\draw[step = 0.25, color = black!15!white] 
																			(-2.999, 1.249) grid (2.999, -1.249);
																	\draw[rounded corners = 1mm, color = black] 
																			(-3, 1.25) rectangle (3, -1.25);
																	\draw 	(-2.1,0.2) node[scale = 0.3] 	{};
																	\draw 	(-0.1,0.2) node[scale = 0.3] 	{};
																	\draw 	(1.9,0.2) node[scale = 0.3] 	{};
																	\draw 	(0,1.25) node (vText) {};
																	\draw 	(-4.85,0.2) node (TMBoxUpper) {};
																	\draw 	(-3, 0.2) node (ArrowPointUpper) {};
																	\draw 	(0,0) node (vCenter) {};						
													\end{scope}
													\draw 	(vText) node [xslant = 0.5, anchor = north] 
															{Virtual Plane};
													\draw 	(vCenter.center) node [xslant = 0.5, anchor = center, 
															scale = 0.7] {\virtNetw};
													\draw[fill = black!15!white, rounded corners = 1mm] 
															([xshift = -2.3cm, yshift = 0.7cm] TMBoxUpper) 
															rectangle ([xshift = 0.8cm, yshift = -1.35cm] TMBoxLower)
															node[pos = .5] (TextTMBoxPlatform) {};
													\draw[-stealth, thick, color = black!35!white, densely dashed] 
															(ArrowPointUpper.center) --%
															([xshift = 0.4cm] TMBoxUpper.center);
													\draw[-stealth, thick] 										
															([xshift = 0.8cm] TMBoxLower.center) --%
															(ArrowPointLower.center);
													\fill[color = white, rounded corners = 1mm] 
															([xshift = -1.8cm, yshift = 0.3cm] TMBoxUpper) 
															rectangle ([xshift = 0.3cm, yshift = -0.3cm] TMBoxLower)
															node[pos = .5] (TextTMBoxAlgorithm) {};
													\draw[step = 0.20, color = black!15!white, 
															rounded corners = 1mm] 
															([xshift = -1.8cm, yshift = 0.3cm] TMBoxUpper) 
															grid ([xshift = 0.3cm, yshift = -0.3cm] TMBoxLower);
													\draw[rounded corners = 1mm] 
															([xshift = -1.8cm, yshift = 0.3cm] TMBoxUpper) 
															rectangle ([xshift = 0.3cm, yshift = -0.3cm] TMBoxLower)
															node[pos = .5] (TextTMBoxAlgorithm) {};			
													\draw 	(TextTMBoxAlgorithm.center) 
															node[anchor = center, align = center] 
															{Networked\\ Control\\ \&\\ Decision-\\ Making};
													\draw 	([yshift = -1.9cm] TextTMBoxPlatform) 
															node[anchor = center, align = center] 
															{Computing\\ Platform};						
							\end{tikzpicture}}
\newcommand{\TikZRSE}{		\begin{tikzpicture}		[scale = 0.92, every node/.style={scale=0.675}]	
													\fill[fill=black!10, rounded corners=0.05cm] 
															(-4.2,0.4) rectangle (1,-1.3);
													\draw 	(1,-1.3) node[anchor = south east] {Mobile Agent};
													\draw[very thick, ->, shorten >= 0.5mm] 
															(-3.6,-1.75) -- (-3.6,-0.75);
													\draw 	(-3.6,-1.75) node[anchor= north] {\(\rzs\)};
													\draw[very thick, ->, shorten >= 0.5mm] 
															(-2.9,-1.75) -- (-2.9,-0.75);
													\draw 	(-2.9,-1.75) node[anchor= north] {\(\rs_0\)};
													\draw[thick, fill = black!35] 
															(-4,0) rectangle (-2.5,-0.75);
													\draw 	(-3.25,0) node[anchor = south] {Plant};
													\draw 	(-3.25,-0.375) node[] {\(\mA\)};
													\draw[very thick, ->, shorten >= 0.5mm] 
															(-2.5,-0.375) -- (-1.5,-0.375);
													\draw 	(-2.5,-0.375) node[anchor= south west] {\(\rss\)};
													\draw[thick, fill = black!35] 
															(-1.5,0) rectangle (0,-0.75);
													\draw 	(-0.75,0) node[anchor = south] {Encoder};
													\draw 	(-0.75,-0.375) node[] {\(\E\)};
													\draw 	(0,-0.375) node[anchor = south west] {\(\rxs\)};
													\draw[very thick, ->, shorten >= 0.5mm] 
															(0,-0.375) -- (1.35,-0.375);
													\draw 	(1.35,-0.375) node[anchor = center, 
															scale = 0.15, rotate = 90] {\sender};
													\draw 	(1.75,-0.375) node[anchor = center] {\(W\)};
													\draw 	(1.75,0 ) node[anchor = south, 
															align = center] {DMC};
													\fill[fill=black!10, rounded corners=0.05cm] 
															(2.5,0.4) rectangle (5.2,-1.3);
													\draw	(2.5,-1.3) node[anchor = south west] {Base Station};
													\draw 	(2.5, -0.375) node[anchor = south west] {\(\rys\)};
													\draw[very thick, ->, shorten >= 0.5mm] 
															(2.15,-0.375) -- (3.5, -0.375);
													\draw 	(2.15,-0.375) node[anchor = center, scale = 0.15, 
															rotate = 90] {\sender};
													\draw[thick, fill = black!35] 
															(3.5,0) rectangle (5,-0.75);
													\draw 	(4.25,0) node[anchor = south] {Decoder};
													\draw 	(4.25,-0.375) node[] {\(\D\)};
													\draw[very thick, ->, shorten >= 0.5mm] 
															(4.25,-0.75) -- (4.25,-1.75);	
													\draw 	(4.25,-1.75) node[anchor= north] {\(\rshs\)};
							\end{tikzpicture}}		
\newcommand{\ftText}{
						Neuromorphic Twins for\\ Networked Control and Decision-Making}
\newcommand{\aText}{%
						We consider the problem of remotely tracking the state of and 	
						unstable linear time-invariant plant by means of data transmitted 	
						through a noisy communication channel from an algorithmic point of 	
						view. Assuming the dynamics of the plant are known, does there exist 
						an algorithm that accepts a description of the channel's 			
						characteristics as input, and returns 'Yes' if the transmission 
						capabilities permit the remote tracking of the plant's state, 'No' 
						otherwise? Does there exist an algorithm that, in case of a positive 
						answer, computes a suitable encoder/decoder-pair for the channel?
						Questions of this kind are becoming increasingly important with 
						regards to future communication technologies that aim to solve 
						control engineering tasks in a distributed manner. In particular, 
						they play an essential role in digital twinning, an 	
						emerging information processing approach originally considered in 	
						the context of Industry 4.0. Yet, the abovementioned questions have	 
						been answered in the negative with respect to algorithms that can be 
						implemented on idealized digital hardware, i.e., Turing machines. 
						In this article, we investigate the remote state estimation problem in 
						view of the Blum-Shub-Smale computability framework. In the broadest 
						sense, the latter can be interpreted as a model for idealized analog 
						computation. Especially in the context of neuromorphic computing, 
						analog hardware has experienced a revival in the past view years. 
						Hence, the contribution of this work may serve as a motivation for 
						a theory of neuromorphic twins as a counterpart to digital twins for 
						analog hardware.}
\newcommand{\kText}{%
						Autonomous systems, communication networks, control over communications,
						fault detection, neuromorphic computing.}
\journal{Journal of \LaTeX\ Templates}
\begin{document}

	\begin{frontmatter}	\title{
											\ftText}
						\nonumnote{
											Parts of this work have been presented at the \emph{IEEE~ICC~2022}~\cite{BoBoDe22b}. 
											Furthermore, the present article builds upon a number of previous works 
											by the authors \cite{BoBoDe21, BoBoDe22a}.}
						\author{
											Holger Boche\fnref{fna}}
						\author{
											Yannik N. Böck\fnref{fna}}
						\fntext[fna]{
											H. Boche and Y. Böck are with the Chair of Theoretical Information Technology, 
											Technical University of Munich, D-80290 Munich (E-mail: boche@tum.de, 
											yannik.boeck@tum.de).}
						\author{
											Christian~Deppe\fnref{fnc}}
						\fntext[fnc]{
											C. Deppe is with the Institute for Communications Engineering, Technical 
											University of Munich, D-80290 Munich (E-mail: christian.deppe@tum.de).} 
						\author{
											Frank~H.~P.~Fitzek\fnref{fnd}}
						\fntext[fnd]{
											F.~H.~P.~Fitzek is with the Deutsche Telekom Chair of Communication
											Networks, Technical University of Dresden, 01187 Dresden, Germany, and the
											Cluster of Excellence “Centre for Tactile Internet with Human-in-the-Loop”
											(CeTI) (E-mail: frank.fitzek@tu-dresden.de).} 
						\begin{abstract}
											\normalsize
											\aText
						\end{abstract}
						\begin{keyword}
											\normalsize
											\kText
						\end{keyword}
	\end{frontmatter}\pagebreak
	

\section{Introduction}	\label{sec:Introduction}
	\noindent Remote and distributed control and decision-making is becoming increasingly important in view of future communication networks and automated control systems. 
	The ideas behind well known topics like ``Industry 4.0'', ``Autonomous Driving'', ``Internet of Things'' and ``Metaverse'' are examples for the current trend 
	in communications to shift the focus away from pure data transmission towards interconnected information processing in a virtual space and control of agents 
	that interact with the physical world. Networks of this kind, i.e., those that contain both virtual and physical components, are commonly referred to as 
	cyber-physical systems \cite{AlEl17, BaCaFo19, Do21, RaSaKv20, TaEA19, QiTa18}. Commonly, within the virtual realm, these systems include copies of the 
	physical agents of the network or, in the most advanced cases, the whole network itself, that are constantly updated to match the state of their real world counterparts.
	These virtual copies of real world entities, be it a single physical system or a complex collective thereof, are referred to as digital twins.
	In turn, the digital twins provide the data employed for automated optimization, control tasks and decision-making for the physical parts of the system.
	
	The expectations towards future cyber-physical networks and digital twin technologies are manifold. In view of the socioeconomic trend towards
	a more sustainable society, the energy consumption of technological systems is regularly mentioned as an area with large room for improvement.
	Advanced digital twinning technologies are required to provide highly optimized and efficient systems that reduce the resource cost of operation
	significantly. On the other hand, future cyber-physical systems are expected to be applied in security- and safety-critical applications to a great extent, 
	as is the case in e.g. autonomous driving. Hence, strict requirements regarding technology assessment and verification are necessary. 
	In summary, the adherence to such requirements is referred to as trustworthiness. The significance of technological trustworthiness is reflected in several 
	ongoing discussions, which are not limited to specific niche of engineering, but instead span a broad range of scientific disciplines as well as politics and 
	popular media \cite{RaSaKv20, AwEA18, FeBo21, FeBo22, GeEA21}. 
	
	Control engineering and robotics are a key field of application for upcoming 6G communication networks. 
	Accordingly, as opposed to classical control engineering, where control task are usually solved locally, 
	networked control and control over communications can be expected to significantly gain in practical importance in the near future.
	From a purely theoretical point of view, networked control and control over communications has sparked interest within the 
	community of control engineering for the past twenty years. The scenarios considered range from high level models 
	that characterize networks through packet loss probabilities~\cite{SeSe05,GuEA09} over tradeoff analysis
	for performance parameters~\cite{HeEA10} to event-triggered approaches~\cite{WaLe11, YuTiHa13} and models that
	incorporate multiple controllers with asymmetric information~\cite{LiEA22}. The theoretical foundation
	of the remote state estimation problem as considered within this article was established in~\cite{MaSa07b, MaSa07a, MaSa05, VeEg02, TaMi04}.
	
	The idea of digital twinning -- creating a virtual copy of the full network including all physical components
	as well as their abstract interaction -- belongs to the most recent concepts in networked control. 
	As indicated just above, large scale system optimization with regards to energy consumption and trustworthiness 
	plays big role in this context. Digital twins are commonly seen as a possible approach for tackling 
	this kind of optimization problems, since they capture (per definition) the relevant properties of
	the system as a whole. In a way, they can be understood as an advanced form of computer simulation, 
	with real time interaction with some real-world counterpart. On a very large scale, the ``Metaverse''
	can be defined as the sum of all such technologies, with the additional layer of synthesizing large parts
	of the physical world in combination with virtual reality technologies. In particular, it is hoped that this will provide an 
	evaluation and training environment for the behavior of autonomous systems in critical special cases
	(e.g., traffic accident situations for systems related to autonomous driving), since such special cases can only be simulated 
	in reality to a very limited extent for evaluation and training purposes.
	
	Integrity, reliability and fault-detection as core parts of trustworthiness have been identified as key challenges for future generations 
	of communication networks, and must be understood and addressed at an unprecedented level, in order to ensure
	the safe operation of physical agents. In the last consequence, our evaluation of technological systems
	is based on mathematical models. In order to obtain a thorough assessment of trustworthiness, these models need to comprise
	the computing hardware employed in the system under consideration. This holds true in particular for digital twin technologies, 
	since several practically relevant engineering problems have have already been shown to exceed the theoretical limitations of 
	digital hardware \cite{FeBo21, FeBo22, BoScPo21a, BoBoDe22a, BoBoDe21, BoScPo20}.	
	Thus, a priori, it is not clear if a specific digital twin is able to reliably capture the relevant properties of its real world counterpart,
	which would be a direct violation of the whole system's integrity. 
	
	Today, the overwhelming majority of automated and autonomous systems are controlled by digital computers. Nevertheless,
	analog hardware has recently experienced a revival and been advanced by IBM, Intel and Samsung, especially in the context of 
	neuromorphic computing. The mathematical models that describe the behavior of such hardware differ fundamentally from
	those used in digital computing. From a theoretical point of view, it is thus relevant to understand the different 
	implications these hardware models have for practical engineering problems. 
	
	With regards to different models for computing hardware, the remote state estimation problem,
	as introduced in~\cite{MaSa07b}, is particularly interesting. Informally, the problem can be described as follows. The state of an unstable linear time-invariant (LTI) 
	plant is observed subsequently by a local encoder, which prepares the data for transmission trough a discrete, memoryless channel (DMC). 
	At the receiving end, the sequence of channel outputs is processed by a remote estimator that calculates a sequence of estimated states. 
	The task is to design an encoder/decoder-pair such that the difference between estimated and actual state satisfies a statistical reliability
	criterion. With respect to almost sure boundedness, the solvability of the problem has been shown to be undecidable on digital hardware~\cite{BoBoDe22a, BoBoDe21}.
	
	Within the scope of this article, we prove that the solvability of the remote state estimation problem with respect to almost sure boundedness
	as reliability criterion is Blum-Shub-Smale decidable. Furthermore, if a certain instance of the problem is indeed solvable, 
	we prove that a suitable encoder/decoder-pair can be constructed algorithmically by a Blum-Shub-Smale machine.
	In the broadest sense, Blum-Shub-Smale machines can be interpreted as an idealized model for universal analog computation.
	Accordingly, they may also be related to a model of universal neuromorphic computing. Further, it has been conjectured that
	Blum-Shub-Smale machines provide the basis for biocomputing, another form of analog ``hardware''. Accordingly, one may conjecture
	that a theoretical neuromorphic twin, i.e., the neuromorphic hardware analogue of digital twins, may be necessary to implement 
	certain technological systems in a trustworthy manner.
	
	The outline of the remainder of the article is as follows. In Section~\ref{sec:DTNT}, we provide a brief 
	discussion of the general notion of digital and hypothetical neuromorphic twinnig.
	Sections~\ref{sec:FormalEstimationSetup},~\ref{sec:PrelZE} and~{PreliminariesBSS} are dedicated to mathematical preliminaries
	and include a formal description of the remote state estimation as well as an overview of computability theory.
	In Section~\ref{sec:SAUA}, we prove that the solvability of the remote state estimation problem with 
	respect to almost sure boundedness as reliability criterion is Blum-Shub-Smale decidable. In Section~\ref{sec:ChCodes},
	we prove that if a certain instance of the problem is indeed solvable, 
	a suitable encoder/decoder-pair can be constructed algorithmically by a Blum-Shub-Smale machine.
	Section \ref{sec:Conclusion} concludes the paper with a brief subsumption of our findings.
      
\section{Digital and Neuromorphic Twinning} \label{sec:DTNT}
	\noindent Albeit a rigorous mathematical formalization of digital twinning is yet to be established, the approach is generally 
	based on creating virtual copies of a technological system's physical agents, 
	which are then employed for optimization, control tasks and decision-making. In more detail, the concept may be summarized as 
	follows:
	\begin{itemize}	[wide]
					\item[1)] 	The starting point is an arbitrary physical entity, e.g., a single technological device, or 
								a complex collective thereof, including their mutual interaction.
								Depending on the specific application, there is a number of relevant properties of that entity that 
								we want to predict, usually emerging from some engineering problem
								such as remote state estimation.
					\item[2)]	The physical entity is assigned a description in some ``language'' that is
								readable by digital machines. The machine-readable description is the object's digital 
								representation, i.e., a digital twin.
					\item[3)]	The digital twin of the entity is used as input for some algorithm, which in turn is supposed to predict 
								one of the object-related properties. In real-world dynamic systems, the output of the algorithm may 
								be used as control input for the physical part. On the other hand, the description 
								of the system is successively updated to match its real world counterpart at any given instance in time.
	\end{itemize}
	Schematically, the approach is represented in Figure~\ref{fig:DT} (in the context of digital twins, 
	the ``computing platform'' is implemented by digital hardware).
	
	\begin{figure}[!htbp]
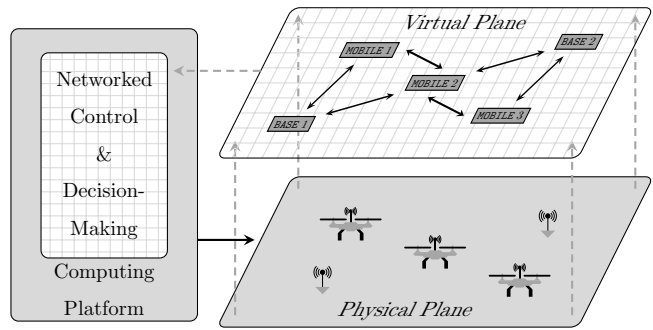
	\centering\TikZDT
							\caption{	Schematic representation of the 
										digital twin approach according to the 
										characteristics given in Section~\ref{sec:DTNT}.}
							\label{fig:DT}
	\end{figure}
	
	Above, the word ``entity'' accounts for the fact that the real-world part of the system under consideration
	does not necessarily consist exclusively of actual physical components. As stated above
	the abstract interaction between the individual components may very well be considered to be
	part of the system as well. 
	
	Examples of cyber-physical and digital twin technologies that implicitly include the remote state estimation probelm 
	can be found e.g. in \cite{Kr16}. Generally, the digital twin of a distributed autonomous control 
	system may include the description of a transmission channel \(W\) in one way or another. Accordingly, the property of 
	interest would be whether the channel's transmission capabilities are sufficient in order to remotely estimate the state of the system.
	Since the control system is autonomous, it is required to adapt to changing environmental conditions at runtime, by itself.
	In the context of remote state estimation, changing environmental conditions correspond to time variant characteristics of the 
	transmission channel \(W\). Hence, the autonomous operation of the distributed autonomous control 
	system system includes channel estimation and the automated generation of a suitable pair of encoder 
	and estimator/controller for each transmission frame.
	
	As indicated in Section~\ref{sec:Introduction}, the solvability of the remote state estimation problem with respect to almost sure boundedness
	as reliability criterion, has been shown to be undecidable on digital hardware. Accordingly, a digital twin that solves 
	the remote state estimation problem autonomously cannot exist. In the remaining part of the article, we will prove the opposite
	for BSS machines. Hence, a hypothetical neuromorphic twin can be designed for this task. Essentially, the above 
	description of digital twins is equivalently valid for neuromorphic twins, except for the difference that the language must be
	understandable to neuromorphic computers. With regards to Figure~\ref{fig:DT}, the computing platform would then be implemented by neuromorphic hardware.
                                                                        
\section{Preliminaries: The Remote State Estimation Problem}	\label{sec:FormalEstimationSetup}
	\noindent In the following, we give a formal description of the RSE problem as depicted in Figure \ref{fig:Schematics},
	and state the result that establishes the link to Shannon's \emph{zero-error capacity} \cite{Sh56, MaSa07b}, which we will discuss in detail in 
	Section~\ref{sec:PrelZE}. The setup is schematically depicted in Figure \ref{fig:Schematics}.
	\begin{figure}[h]
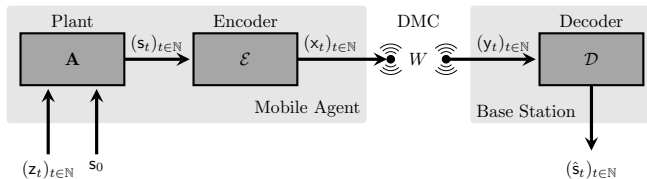
	\centering\TikZRSE
						\caption{	Schematics of the remote state estimation 
									and stabilization setup considered throughout this article.}
						\label{fig:Schematics}
	\end{figure} 
	
	As indicated in Section \ref{sec:Introduction}, the RSE problem revolves around the state sequence \((\rs_t)_{t\in\NN}\) 
	of an unstable LTI system. The system's dynamics are modeled by means of a matrix \(\mA \in \RR^{n\times n}\), which governs the evolution of \((\rs_t)_{t\in\NN}\) 
	according to the set of equations 
	\begin{align*}	 \rs_{t+1}    	= \mA \rs_t + \rz_t,
	\end{align*}
	Here, the probabilistic sequence \((\rz_t)_{t\in\NN}\) accounts for noise-like disturbances in the system's state progression. 
	Furthermore, the initial state \(\rs_0\) is assumed to emerge randomly, realizing some known probability distribution. 
	The system status is continuously monitored by a local encoder.	
	Based on the available state information (that is, the tuple \((\rs_1,\rs_2,\ldots,\rs_t)\)), the encoder successively chooses channel inputs from a finite alphabet \(\X\) 
	according to the relation
	\begin{align*}	\rx_t   = \E \big(t, (\rs_{t'})_{t'=1}^{t}\big) 							
							= \E\big(t, \rs_1,\rs_2,\ldots,\rs_t\big).
	\end{align*}
	The sequence of channel outputs \((\ry_t)_{t\in\NN}\) is a sequence
	of symbols from another finite alphabet \(\Y\), and is related component-wise to the sequence \((\rx_t)_{t\in\NN}\) by a conditional probability mass function
	\begin{align*}	W :~ \Y \times \X \rightarrow \RR_{\hspace{1pt}0}^+,~(y,x) \mapsto W(y|x).
	\end{align*}
	The triple \((\X,\Y,W)\) fully characterizes the behavior of the communication unit, and is referred to as a discrete, memoryless channel (DMC).
	Last but not least, at the receiving end, the plant's state sequence is estimated according to a relation
	\begin{align*}	\hat{\rs}_t  	&= \D\big(t, (\ry_{t'})_{t'=1}^{t}\big)  \\
									&= \D\big(t,\ry_1,\ry_2,\ldots,\ry_t\big).
	\end{align*}
	
	\begin{Remark} 	Originally, the setup introduced in \cite{MaSa07b} includes a local sensor in between the LTI-system's output and the input of the
					encoder. This sensor is characterized by a matrix \(\mB\). This assumption is common in control theory, an introduces the condition of
					\emph{obeservability}. However, assuming that the pair \((\mA,\mB)\) satisfies observability, the matrix \(\mB\) has no further influence
					on the solvability of the RSE problem. Thus, without loss of generality, we assume \(\mB\) to equal the identity matrix throughout this article,
					which leads to the scenario described above.
	\end{Remark} 
	 
	Given the pair \((\mA,W)\), the task consists of designing pair \((\E,\D)\) of encoder and estimator, such that
	\begin{align}	\label{eq:ConditionED}
					\sup_{t\in\NN} \big\| \rs_t - \hat{\rs}_t \big\| < \infty
	\end{align}
	is almost surely satisfied. Assuming that there exists a compact subset of \(\RR^n\) that supports the probability
	distribution of \(\rz_t\) for all \(t\in\NN\), the existence of such a pair depends indeed exclusively on the pair \((\mA,W)\).
	Accordingly, we can define the sets 
	\begin{align*}	\S^{\sim}(\mA) := \big\{ W \in \W(\X,\Y) : 	&~\text{exists}~(\E,\D)~\text{s.t.} \\ 
																&~\text{\eqref{eq:ConditionED} is satisfied a.s.}\big\} \\
					\U^{\sim}(\mA) := \big\{ W \in \W(\X,\Y) : 	&~\text{for all}~(\E,\D), \\ 
																&~\text{\eqref{eq:ConditionED} is violated a.s.}\big\}    
	\end{align*}
	where \(\W(\X,\Y)\) is the set of conditional probability mass functions on finite alphabets \(\X\) and \(\Y\),
	\begin{align*}	\W(\X,\Y)	:=	\bigg\{ W: 	&~\X \times \Y \rightarrow\RR_{\hspace{1pt}0}^+,~ (x,y) \mapsto W(y|x),\\
												&~\forall x\in\X: ~{\sum}_{y\in\Y} W(y|x) = 1 \bigg\}.
	\end{align*}
	Clearly, we have \(\U^{\sim}(\mA) \subseteq \W(\X,\Y)\setminus\S^{\sim}(\mA)\).
	
	The solvability of the RSE problem can furthermore be reduced to the quantities \(\mu(\mA)\in\RR_{\hspace{1pt}0}^+\) and
	\(C_0(W)\in\RR_{\hspace{1pt}0}^+\), as was shown in~\cite{MaSa07b}. The only exception is the case of \(\mu(\mA) = C_0(W)\),
	in which no general statement can be made. In the following, we will give a brief introduction to \(\mu(\mA)\in\RR_{\hspace{1pt}0}^+\) and
	\(C_0(W)\in\RR_{\hspace{1pt}0}^+\) as well as stating the theorem that links them to the solvability of the RSE problem. A detailed discussion
	of \(C_0(W)\) will be postponed to Section~\ref{sec:PrelZE}.
	
	Denote by \(\bm{\lambda}(\mA) := (\lambda_1(A),\lambda_2(A),\ldots,\lambda_n(\mA))\) a tuple
	consisting of all eigenvalues of \(\mA\), according to their algebraic multiplicity. Consequently, for
	\(\mA\in \RR^{n\times n}\), the tuple \(\bm{\lambda}(\mA)\) consists of exactly \(n\) components. We define
	\begin{align}	\eta(\mA):={\sum}_{j:|\lambda_j(\mA)|\geq 1} \log_2|\lambda_j(\mA)|.
	\end{align}
	Informally, the quantity \(\eta(\mA)\) equals the average number of bits that need to be transmitted per time instance in order
	to successfully estimate the state sequence \(\rss\). Since the difference between the actual and the estimated state sequence
	is required to be bounded almost surely, i.e., with probability \(1\), it is necessary that the information gathered by the encoder
	are transmitted with perfect reliability. In other words, the chance of a transmission error induced by the probabilistic nature
	of the channel \((W,\X,\Y)\) to occur has to equal \(0\). Given \(W\), the average number of bits that can be transmitted with perfect reliability
	through the channel \((W,\X,\Y)\) per time instance equals the quantity \(C_0(W)\), which is referred to as \emph{Shannon's zero-error capacity}.
	In formal terms, the following relation between \(\mu(\mA)\), \(C_0(W)\) and the solvability of the RSE holds true:
	
	\begin{Theorem}[\cite{MaSa07b}]	\label{thm:SolvabilityRSE}
									If \(\mu(\mA) < C_0(W)\), then \(W \in \S^{\sim}(\mA)\). Likewise, if \(\mu(\mA) > C_0(W)\), 
									then \(W \in \U^{\sim}(\mA)\). 
	\end{Theorem}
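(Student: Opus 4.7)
The plan is to prove the two halves of the dichotomy separately, after reducing to the unstable part of the dynamics. The eigenspaces of $\mA$ decompose $\RR^n$ into $\mA$-invariant stable and unstable subspaces; since the disturbances $\rz_t$ have compact support, the projection of $\rs_t$ onto the stable subspace is almost surely bounded and can be tracked with zero channel use. So it suffices to track the projection onto the unstable subspace, whose volumetric expansion per time step is precisely $2^{\mu(\mA)}$ (identified here with the quantity $\eta(\mA)$ introduced above the theorem).

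For the achievability direction $\mu(\mA) < C_0(W) \Rightarrow W \in \S^{\sim}(\mA)$, I would build $(\E,\D)$ from a nested-quantization strategy driven by a zero-error block code. Fix $\epsilon > 0$ with $\mu(\mA) + \epsilon < C_0(W)$, pick a blocklength $N$ and a zero-error code of $M \geq \lceil 2^{N(\mu(\mA)+\epsilon)} \rceil$ codewords. At the start of each length-$N$ block, the encoder maintains a bounded uncertainty region $\mathcal R_k \subset \RR^n$ known to contain $\rs_{kN}$, partitions it into $M$ congruent subcells, and transmits the index of the cell holding the true state via the zero-error code; the decoder recovers the cell exactly. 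During the next $N$ steps the cell is dilated by $\mA^N$ (a factor of $2^{N\mu(\mA)}$ in the unstable directions) and enlarged by the Minkowski sum with the compact support of the accumulated noise. Since $M \geq 2^{N(\mu(\mA)+\epsilon)}$, the updated region $\mathcal R_{k+1}$ stays uniformly bounded in $k$, yielding \eqref{eq:ConditionED} almost surely.

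For the converse $\mu(\mA) > C_0(W) \Rightarrow W \in \U^{\sim}(\mA)$, I would argue by contradiction. Suppose some $(\E,\D)$ achieves an almost sure bound $\|\rs_t - \hat{\rs}_t\| \leq B$. The law of $\rs_t$ on the unstable subspace is supported on a set whose volume grows as $2^{t\mu(\mA)}$, so the set of decoder outputs $\hat{\rs}_t$ must produce at least $\sim 2^{t\mu(\mA)}$ essentially distinct values to cover it with radius-$B$ balls. Since the decoder depends deterministically on the channel output, and since any positive residual probability of confusing two candidate estimates is amplified by the unstable dynamics into an almost sure divergence (via Borel--Cantelli applied to the $\mA$-dilation of the error), only zero-error distinguishable input sequences may produce distinct estimates. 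The number of such sequences of length $t$ grows at most as $2^{tC_0(W) + o(t)}$ by the very definition of $C_0(W)$, and comparing the two growth rates yields $\mu(\mA) \leq C_0(W)$, contradicting the hypothesis.

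The hard part will be the converse, specifically the step that converts almost sure boundedness of the state error into the combinatorial rigidity of zero-error distinguishability of channel input sequences. Any pair of inputs confusable with positive probability under $W$ cannot be safely assigned to different ``state trajectories,'' because the expansion by $\mA$ turns even rare decoding mistakes into unbounded estimation errors over the long run. Making this precise requires setting up the confusability graph at blocklength $N$ and running a Borel--Cantelli argument along the time axis that couples the rarity of noise events to the exponential growth of the unstable modes; this is the analytic core of the theorem as established in~\cite{MaSa07b}.
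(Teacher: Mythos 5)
The paper does not prove Theorem~\ref{thm:SolvabilityRSE}: it is quoted verbatim from Matveev and Savkin~\cite{MaSa07b} as a preliminary and used as a black box, so there is no proof in the present article for your sketch to be compared against.

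That said, your outline does track the broad structure of the argument in~\cite{MaSa07b}: stable/unstable modal decomposition, nested quantization driven by a zero-error block code for achievability, and a covering-versus-counting argument against zero-error distinguishability for the converse. Two places where the sketch, as written, would not close. First, in the achievability direction you have the encoder quantize $\rs_{kN}$ at the start of block $k$ and transmit the cell index over the next $N$ channel uses, but the channel input at time $kN + j$ may depend only on states up to time $kN + j$, and the decoder cannot use the cell index until the block ends; this causality mismatch forces a one-block transmission delay, a subtlety the paper itself flags in Section~\ref{sec:ChCodes} around~\eqref{eq:Erestr} and resolves by appealing to~\cite{MaSa05}. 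Your sketch silently assumes it away. Second, the converse step you flag yourself --- promoting ``confusable with positive probability'' to ``cannot safely be assigned distinct estimates'' --- is the analytic core, and the Borel--Cantelli gesture is not yet an argument: one must show that the per-block confusion probability is bounded away from zero uniformly over the encoder's adaptive choices (which requires a richness assumption on the disturbance law beyond mere compact support) and that the resulting error is not repaired by subsequent transmissions, before the exponential $\mA$-dilation can be invoked to force almost sure divergence. Identifying the confusability graph at blocklength $N$ as the right object is correct, but the coupling to the dynamics is the work. In short: a serviceable roadmap of the cited reference, not a proof, and not something the paper itself contains or needs to contain.
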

	
	The relation established by Theorem~\ref{thm:SolvabilityRSE} motivates the definition of the sets
	\begin{align}	\S(\mA) :&= \big\{ W \in \W(\X,\Y) : C_0 > \eta(\mA)\big\}, \\
					\U(\mA) :&= \big\{ W \in \W(\X,\Y) : C_0 < \eta(\mA)\big\}.
	\end{align}
	Accordingly, we have \(\S(\mA) \subseteq \S^{\sim}(\mA)\) as well as \(\U(\mA) \subseteq \U^{\sim}(\mA)\).
	
	\begin{Remark}	A problem related to remote state estimation that was also considered in~\cite{MaSa07b} involves the
					task of stabilizing the plant's state sequence by means of a control link.
					In this case, the relevant characteristics of the channel are quantified by \emph{Shannon's zero-error feedback capacity}
					\(C_{\mathrm{f},0}(W)\). In terms of their mathematical structure, Shannon's zero-error capacity and
					Shannon's zero-error feedback capacity are similar in the properties that are decisive in our context.
					Namely, for fixed alphabets \(\X\) and \(\Y\), they are both constant on semi-algebraic subsets of \(\RR^{|\X||\Y|}\)
					and their range is finite.
					Hence, the results established in the scope of this article carry over to the  
					remote state stabilization scenario in an analogous manner.
	\end{Remark} 

\section{Preliminaries: Zero-Error Channel Coding} \label{sec:PrelZE}
	\noindent Given a channel \(W\in\mathcal{W}(\mathcal{X},\mathcal{Y})\) with transition probabilities \(W(y|x) : (x,y) \in \mathcal{X}\times\mathcal{Y}\),
	consider fixed orderings \(x_1,\ldots,x_{|\mathcal{X}|}\) and \(y_1,\ldots y_{|\mathcal{Y}|}\) of the letters in \(\mathcal{X}\) and \(\mathcal{Y}\).
	Since \(\mathcal{X}\) and \(\mathcal{Y}\) are assumed finite, such orderings always exist. Furthermore, we can stack the channel transition probabilities
	into a tuple according to 
	\begin{align*}	\bm{w} := 	\Big(	&W\big(y_1|x_1\big),					\ldots, W\big(y_1|x_{|\mathcal{X}|}\big), \\
										&W\big(y_2|x_1\big),					\ldots,	W\big(y_2|x_{|\mathcal{X}|}\big), \\
										&\ldots,																			\\
										&W\big(y_{|\mathcal{X}|}|x_1\big),	\ldots,	W\big(y_{|\mathcal{X}|}|x_{|\mathcal{X}|}\big)
								\Big) 
					\in\mathbb{R}^{|\mathcal{X}||\mathcal{Y}|}.
	\end{align*}
	We will associate \(\bm{w}\) and \(W\) with each other and, with some abuse of notation, write 
	\(\bm{w} \in \mathcal{W}(\mathcal{X},\mathcal{Y}) \subset \mathbb{R}^{\mathcal{X}\times\mathcal{Y}}\). Further, we will sometimes
	denote the individual components of \(\bm{w}\) by \(w_{x,y}\), \((x,y)\in\mathcal{X}\times\mathcal{Y}\), instead of 
	\(w_i\), \(i\in \{1,\ldots,|\mathcal{X}||\mathcal{Y}|\}\).
	
	Without loss of generality, we will consider \(\mathcal{X}\) and \(\mathcal{Y}\) such that \(\mathcal{X} \cap \mathcal{Y} = \emptyset\) holds true
	throughout this article. By \(\Sigma : \mathcal{X} \cup \mathcal{Y} \rightarrow \{1,\ldots, |\mathcal{X}|+|\mathcal{Y}|\}\), we denote the unique
	mapping that satisfies \(\Sigma(x_i) = i\) and \(\Sigma(y_i) = |\mathcal{X}| + i\) with respect to the above orderings for all \(x\in\mathcal{X}\)
	and all \(x\in\mathcal{X}\), respectively.
	
	Given finite alphabets \(\mathcal{X}\) and \(\mathcal{Y}\), an \((\mathrm{N},\mathrm{M})\)-code  
	is a set \(\mathfrak{C} \subseteq \mathcal{X}^\mathrm{N}\times\mathcal{Y}^\mathrm{N}\) that satisfies the following:
	\begin{itemize}	\item[1)]	If \((\bm{x}_1,\bm{y})\in \mathcal{X}^\mathrm{N}\times\mathcal{Y}^\mathrm{N}\) and 
								\((\bm{x}_2,\bm{y})\in \mathcal{X}^\mathrm{N}\times\mathcal{Y}^\mathrm{N}\) are both an element of \(\mathfrak{C}\),
								then \(\bm{x}_1 = \bm{x}_2\) holds true.
					\item[2)]	The cardinality of the set \(\mathfrak{M}(\mathfrak{C}) := \big\{\bm{x} \in \mathcal{X}^{\mathrm{N}} : \exists \bm{y} \in 
								\mathcal{Y}^{\mathrm{N}} : (\bm{x},\bm{y})\in\mathfrak{C}\big\}\) equals \(\mathrm{M}\). The set \(\mathfrak{M}(\mathfrak{C})\) 
								is referred to as the message set.
	\end{itemize}
	The rate \(R(\mathfrak{C})\) of the code \(\mathfrak{C}\) is defined by \(R(\mathfrak{C}) := \sfrac{1}{\mathrm{N}} \cdot \log_2 \mathrm{M}\).
	
	Condition 1) ensures that for all \(\bm{y}\in\Y^{\mathrm{N}}\), there exists no more than one \(\bm{x}\in\X^{\mathrm{N}}\) such that 
	\((\bm{x},\bm{y})\in\mathfrak{C}\) holds true. Accordingly, the intuitive interpretation of an \((\mathrm{N},\mathrm{M})\)-code  
	\(\mathfrak{C} \subseteq \mathcal{X}^\mathrm{N}\times\mathcal{Y}^\mathrm{N}\) for a channel \(W\in\mathcal{W}(\mathcal{X},\mathcal{Y})\)
	is as follows. If the output sequence \(\bm{y}\) occurs at the receiving end at the channel and \((\bm{x},\bm{y})\in\mathfrak{C}\) is
	satisfied, the receiver will assume that the sequence \(\bm{x}\) has been input to the channel. If \(\bm{x}\) has indeed been
	input to the channel, we call the transmission successful. If the true input of the channel differs from \(\bm{x}\) or there does not
	exist \(\bm{x}\in\X^{\mathrm{N}}\) such that \((\bm{x},\bm{y})\in\mathfrak{C}\) is satisfied, we call the transmission erroneous.
	
	Motivated by the intuitive interpretation, we denote by
	\begin{align}	\label{eq:Smin}
					\mathrm{S}_{\min}(W,\mathfrak{C}) :=
					\min_{\bm{x}\in\mathfrak{M}(\mathfrak{C})} \sum_{\substack{\bm{y}\in\mathcal{Y}^{\mathrm{N}}:\\ (\bm{x},\bm{y})\in \mathfrak{C}}}
					\prod_{j = 1}^{\mathrm{N}} W\Big(y^j|x^j\Big)
	\end{align}
	the minimum probability of a successful transmission through the channel \(W\) using the code \(\mathfrak{C}\).
	If \(\mathrm{S}_{\min}(W,\mathfrak{C}) = 1\) is satisfied, then \(\mathfrak{C}\) is called a zero-error code for \(W\).
	Given \(W\), we denote the set of all such codes by \(\mathcal{C}_0(W)\). Note that formally, \(\mathcal{C}_0(W)\) is never empty:
	For all \(n\in\mathbb{N}\), we have
	\begin{align}	\label{eq:CodeFooI}
					\Big\{\big\{(\bm{x},\bm{y}) : \bm{y} \in \mathcal{Y}^n\big\} : \bm{x}\in\mathcal{X}^n\Big\} \subset \mathcal{C}_0(W).
	\end{align}
	However, the rate of every code contained in the set on the left-hand side of \eqref{eq:CodeFooI} is equal to \(0\).
	
	\begin{Definition}	For \(W\in\mathcal{W}(\mathcal{X},\mathcal{Y})\), the zero-error capacity \(C_0(W)\) is
						defined by
						\begin{align}	C_0(W) := \sup\Big\{R(\mathfrak{C}) : \mathfrak{C}\in\mathcal{C}_0(W)\Big\}
						\end{align}
	\end{Definition}
	
	Given an \((\mathrm{N},\mathrm{M})\)-code \(\mathfrak{C}\), we define \(R_{0}(W,\mathfrak{C}) := R(\mathfrak{C})\)
	if \(\mathfrak{C}\in\mathcal{C}_0(W)\) is satisfied, \(R_{0}(W,\mathfrak{C}) := 0\), otherwise.
	
		As already proven by Shannon himself in his seminal article on the zero-error capacity, we do not require full knowledge
	on the triple \((W,\X,\Y)\) to determine \(C_0(W)\). In fact, \(C_0(W)\) depends only indirectly on \(\Y\) and the explicit
	transition probabilities \(\bm{w}\). Consider the simple graph \(G(W) := (\mathcal{K}(W),\X)\) with 
	edge set \(\mathcal{K}(W)\), defined by 
	\begin{align*}	\mathcal{K}(W) := \Big\{ \{u,v\} \subseteq \X : ~u \neq v,~\exists y \in \Y: w_{u,y}w_{v,y} > 0 \Big\}.    
	\end{align*}
	In other words, two (distinct) vertices \(u\) and \(v\) are adjacent in \(G(W)\) if and only if, 
	when interpreted as input to the channel \(W\), they produce the same output symbol at the receiving end of the channel with
	non-zero probability. The vertices \(u\) and \(v\) are said to be \emph{confusable}, and \(G(W)\) is 
	referred to as the \emph{confusability graph} of \(W\). Furthermore, denote by \(G^{\boxtimes n}(W)\) the \(n\)-fold \emph{strong product} of \(G(W)\) with
	itself and \(\alpha\big(G^{\boxtimes n}(W)\big)\) the \emph{independence number} of \(G^{\boxtimes n}(W)\). Shannon obtained 
	the following:
	
	\begin{Theorem}[\cite{Sh56}]	\label{thm:Shannon}
									The zero-error capacity of the channel \(W \in \mathcal{W}\in(\X,\Y)\) satisfies
									\begin{align*}	C_0(W)	&= \lim_{n\to\infty} \frac{1}{n} \log_2 \alpha\big(G^{\boxtimes n}(W)\big) \\
															&= \sup_{n\in\NN} \frac{1}{n} \log_2 \alpha\big(G^{\boxtimes n}(W)\big).
									\end{align*}
	\end{Theorem}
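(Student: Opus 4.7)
The plan is to reduce the theorem to a combinatorial statement about independent sets in strong graph products, and then invoke a Fekete-type limit argument. The central observation I will exploit is that a zero-error code of block length $n$ for $W$ is, up to relabelling, nothing but a choice of pairwise non-confusable input words in $\mathcal{X}^n$, and this is exactly the statement that the message set $\mathfrak{M}(\mathfrak{C})$ is an independent set of $G^{\boxtimes n}(W)$.

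First I would make this correspondence precise. Two words $\mathbf{x},\mathbf{x}'\in\mathcal{X}^n$ are adjacent in $G^{\boxtimes n}(W)$ iff $\mathbf{x}\neq\mathbf{x}'$ and for every coordinate $i$, either $x_i = x_i'$ or $\{x_i,x_i'\}\in\mathcal{K}(W)$; unfolding the definition of $\mathcal{K}(W)$, this is equivalent to the existence of some $\mathbf{y}\in\mathcal{Y}^n$ with $\prod_{i=1}^n W(y_i|x_i)\,W(y_i|x_i')>0$. Hence, given an independent set $I\subseteq\mathcal{X}^n$, the code $\mathfrak{C}_I := \{(\mathbf{x},\mathbf{y})\in I\times\mathcal{Y}^n : \prod_i W(y_i|x_i)>0\}$ satisfies the uniqueness condition 1) of the code definition (else two distinct elements of $I$ would share a positive-probability output, contradicting independence), so $\mathfrak{C}_I\in\mathcal{C}_0(W)$ and $\mathfrak{M}(\mathfrak{C}_I) = I$. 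Conversely, for any $\mathfrak{C}\in\mathcal{C}_0(W)$, condition 1) forces $\mathrm{S}_{\min}(W,\mathfrak{C})=1$ to imply that any two codewords $\mathbf{x},\mathbf{x}'\in\mathfrak{M}(\mathfrak{C})$ must be non-confusable, so $\mathfrak{M}(\mathfrak{C})$ is independent in $G^{\boxtimes n}(W)$. Consequently, the largest $\mathrm{M}$ with an $(n,\mathrm{M})$-zero-error code is exactly $\alpha(G^{\boxtimes n}(W))$, which yields
\begin{equation*}
C_0(W) \;=\; \sup_{n\in\mathbb{N}} \frac{1}{n}\log_2 \alpha\bigl(G^{\boxtimes n}(W)\bigr).
\end{equation*}

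Second, to identify this supremum with the limit, I would verify the supermultiplicativity $\alpha(G^{\boxtimes(n+m)}(W))\geq \alpha(G^{\boxtimes n}(W))\cdot\alpha(G^{\boxtimes m}(W))$: given independent sets $I_n\subseteq\mathcal{X}^n$ and $I_m\subseteq\mathcal{X}^m$, their concatenation $\{\mathbf{x}\mathbf{x}':\mathbf{x}\in I_n,\mathbf{x}'\in I_m\}$ is easily checked to be independent in $G^{\boxtimes(n+m)}(W)$ from the coordinate-wise description of adjacency given above. Taking logarithms, the sequence $a_n := \log_2\alpha(G^{\boxtimes n}(W))$ is superadditive, and Fekete's lemma gives $\lim_{n\to\infty} a_n/n = \sup_{n\in\mathbb{N}} a_n/n$, which finishes the proof.

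The routine part is the supermultiplicativity bookkeeping and Fekete's lemma. The one step that deserves care — and the step I expect to be the main obstacle — is the precise equivalence between $\mathfrak{C}\in\mathcal{C}_0(W)$ and independence of $\mathfrak{M}(\mathfrak{C})$: the definition of a code allows the decoding region $\{\mathbf{y}:(\mathbf{x},\mathbf{y})\in\mathfrak{C}\}$ to be a strict subset of $\{\mathbf{y}:\prod_i W(y_i|x_i)>0\}$, so the condition $\mathrm{S}_{\min}(W,\mathfrak{C})=1$ needs to be unpacked carefully to conclude that no output $\mathbf{y}$ with positive probability under some codeword may also have positive probability under a different codeword, which in turn is what forces independence.
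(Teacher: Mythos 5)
The paper does not prove Theorem~\ref{thm:Shannon}; it is stated as a citation to Shannon's 1956 article, so there is no internal proof to compare against. Your reconstruction is correct and is the standard argument one would find in the cited reference: the bijection between zero-error codes of block length $n$ and independent sets of $G^{\boxtimes n}(W)$, followed by supermultiplicativity of the independence number under the strong product and Fekete's lemma. You have also correctly identified and resolved the one genuinely delicate point, namely that the paper's code definition permits decoding regions strictly smaller than the positive-probability output set, so one must argue from $\mathrm{S}_{\min}(W,\mathfrak{C})=1$ together with the disjointness condition that confusable codewords cannot coexist.
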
 
	
	The number of simple graphs with vertex set \(\X\) is finite. Hence, the set \(\{C_0(W): W \in \mathcal{W}(\mathcal{X},\mathcal{Y})\}\)
	is a finite subset of \(\RR_{\hspace{1pt}}^+\). This observation will be one of the two key requirements for our results on the computability
	of \(C_0(W)\) in dependence of \(W \in \mathcal{W}(\mathcal{X},\mathcal{Y})\).

\section{Preliminaries: Blum-Shub-Smale Theory}	\label{sec:PreliminariesBSS}
	\noindent In theoretical informatics, the arguably most widespread and well-accepted framework for formalizing the intuitive notion of algorithms
	is the theory of Turing Machines, which was first introduced in \cite{Tu37a,Tu38}. The Church-Turing thesis, which implicitly states that Turing 
	Machines form an exact model of the fundamental capabilities of real-world digital computers, finds large support in the relevant community. 
	Regardless, the theory of Turing machines rarely receives attention in engineering and the applied sciences. As pointed out in~\cite{Bl04},
	the practice of numerical analysis much more resembles the classical tradition of equation solving and calculus, where computations are 
	carried out in a symbolic manner and the underlying spaces are usually the real or complex numbers. Turing's theory, on the other hand, 
	operates on discrete sets. Nevertheless, the concept of algorithms does exist in classical calculus, a famous example being Newton's method
	for approximating zeros of a real-valued, continuous function. A formalization of the notion of algorithms over the reals is given in 
	terms of the Blum-Shub-Smale (BSS) machine, which can be considered the implicit quasi standard in many areas of technical and scientific computing~\cite{BlShSm88}.
	
	Intuitively, BSS machines can be thought of as hypothetical analog computers that share their architecture and instruction set with 
	digital (Turing complete) computers, but instead operate on ``infinite-precision floats'', i.e., exact real numbers, rather than finite-precision 
	approximations of the latter. Thus, BSS machines are also referred to as real RAM model.
	Regarding practical science and engineering, two reasons for the implicit popularity of the BSS machine can be put forward. 
	First, most areas of applied mathematics involve continuous structures based on the real or complex numbers. Hence, 
	treating the content of a computer's memory as such makes the description of algorithms feel more natural and simplifies it to a great extent.
	Second, practical numerical problems are often approached under the heuristics that the error emerging from representing real numbers 
	by floating point approximations is negligible, such that the relevant behavior of the real-world system under consideration sufficiently
	reflects in the numerical simulations nevertheless. 
	
	In contrast to Turing's theory for digital computers, a widely accepted mathematical model for universal analog computation
	does, to the best of the authors' knowledge, not exist. Apart from their popularity for heuristic reasons,
	Blum-Shub-Smale machines may actually form the basis for the development of such a model. Since neuromorphic computers 
	are analog, Blum-Shub-Smale machines may capture their computational capabilities to some extend. Yet, as mentioned above,
	the theory of universal analog computation is far from being sufficiently developed to make precise predictions in this regard. 
	
	As indicated above, BSS machines formalize the notion of algorithms over the real numbers. Throughout this article,
	we will consider the set of BSS computable functions \(\mathcal{BSS}\), i.e., the set of functions that can be 
	computed by a BSS algorithm, which is a proper subset of the set of partial functions that map tuples of real numbers to real numbers.
	In formal terms, we have
	\begin{align}	\mathcal{BSS} \subset \big\{f: \mathbb{R}^n \supseteq\rightarrow \mathbb{R}, n\in\mathbb{N}\big\}.
	\end{align}
	Thinking of BSS machines as ``normal'' computers that have access to an infinite-precision RAM provides a helpful 
	intuition of which type of algorithms can be implemented on a BSS machine. For a mathematically rigorous yet
	concise introduction that also includes a comparison to the Turing model, we refer to~\cite{Bl04}.
	
	In the following, we will state some basic properties of the set \(\mathcal{BSS}\). These will subsequently
	be employed to derive the contributions of this work. For \(n\in\mathbb{N}\) we denote elements of \(\mathbb{R}^n\)
	by bold letters, i.e., \(\bm{x} := (x_1,\ldots,x_n) \in \mathbb{R}^n\), where \(x_1,\ldots,x_n\) are the individual components
	of the tuple \(\bm{x}\).
	\begin{itemize}	\item[\itNr{1}]	Constant functions and projection functions are BSS computable. That is, for all \(n,m\in\mathbb{N}\) with \(m\leq n\)
									and all \(c\in\mathbb{R}\), the functions 
									\begin{align*}	\bm{x} 				&\mapsto c,\quad \bm{x}\in\mathbb{R}^n, \\
													\bm{x} 				&\mapsto x_m,\quad \bm{x}\in\mathbb{R}^n,
									\end{align*}
									are elements of \(\mathcal{BSS}\).
					\item[\itNr{2}] The concatenation of BSS computable functions is BSS computable. That is, if, for \(n,m\in\mathbb{N}\),
									\(B_0 : \mathbb{R}^n \rightarrow \mathbb{R}\) and \(B_1 : \mathbb{R}^m \rightarrow \mathbb{R}, \ldots, B_n : 
									\mathbb{R}^m\rightarrow \mathbb{R}\) are elements of \(\mathcal{BSS}\), then
									\begin{align*}	\bm{x} 	&\mapsto  B_0\big(B_1(\bm{x}),\ldots,B_n(\bm{x})\big),\quad \bm{x} \in\mathbb{R}^m, 			
									\end{align*}
									is an element of \(\mathcal{BSS}\).
					\item[\itNr{3}] The field operations on \(\mathbb{R}\) are BSS computable. That is, if, for \(n\in\mathbb{N}\),
									\(B_1 : \mathbb{R}^n\rightarrow \mathbb{R}\) and \(B_2 : \mathbb{R}^n\rightarrow \mathbb{R}\)
									are elements of \(\mathcal{BSS}\), the functions
									\begin{align*}	\bm{x} 				&\mapsto  B_1(\bm{x}) + B_2(\bm{x}),\quad \bm{x} \in\mathbb{R}^n, \\	
													\bm{x} 				&\mapsto  B_1(\bm{x})B_2(\bm{x}),\quad \bm{x} \in\mathbb{R}^n,
									\end{align*}
									are elements of \(\mathcal{BSS}\).
					\item[\itNr{4}] The ordering relation on \(\mathbb{R}\) is BSS computable. In particular, if, for \(n\in\mathbb{N}\),
									\(B_1 : \mathbb{R}^m \rightarrow \mathbb{R}\), \(B_2 : \mathbb{R}^m\rightarrow \mathbb{R}\) and \(B_3 : 
									\mathbb{R}^m\rightarrow \mathbb{R}\) are elements of \(\mathcal{BSS}\), the functions
									\begin{align*}	\bm{x} 				&\mapsto  	\begin{cases}	B_1(\bm{x}),		&\text{if}~ B_3(\bm{x}) > 0, \\
																									B_2(\bm{x}),		&\text{otherwise},
																					\end{cases},\quad \bm{x} \in\mathbb{R}^n, \\
													\bm{x} 				&\mapsto  	\begin{cases}	B_1(\bm{x}),		&\text{if}~ B_3(\bm{x}) = 0, \\
																									B_2(\bm{x}),		&\text{otherwise},
																					\end{cases},\quad \bm{x} \in\mathbb{R}^n, 
									\end{align*}
									are elements of \(\mathcal{BSS}\).
	\end{itemize}
	Observe that as compared to the field operations on \(R\), for example, the ceiling and floor functions are no fundamental operations on
	BSS machines according to the original definition. That is, they have to be computed by a non-trivial BSS algorithm.
	In some places, the relevant literature considers a variant of the BSS machine that can execute the ceiling and floor functions as a fundamental machine operation. 
	While this has a drastic impact on the complexity of some algorithms, the set of computable functions \(\mathcal{BSS}\) remains unchanged, regardless of
	whether the ceiling and floor functions are added to the list of fundamental operations.
	
	In the framework of Turing machines, the conceptual equivalent to the set \(\mathcal{BSS}\) is the set of Turing computable functions \(\mathcal{TM}\),
	which is a proper subset of the set of partial functions that map tuples of natural numbers to natural numbers.
	In formal terms, we have
	\begin{align}	\mathcal{TM} \subset \big\{f: \mathbb{N}^n \supseteq\rightarrow \mathbb{N}, n\in\mathbb{N}\big\}.
	\end{align}
	The set of Turing computable functions is equal to the set of \(\mu\)-recursive functions \(\mathcal{C}^*\), which is the smallest set 
	of partial functions on the natural numbers that contains the (natural-number-valued) successor function, all constant and identity 
	functions~\cite[Definition~2.1, p.~8]{So87}, and is closed with respect to composition, primitive recursion and unbounded 
	search~\cite[Definition~2.1, p.~8, Definition~2.2, p.~10]{So87}. Among others, \(\mu\)-recursive functions were considered in~\cite{Kl36}. 
	The equality of \(\mathcal{TM}\) and \(\mathcal{C}^*\) was proven in~\cite{Tu37b}.

	Within this work, Turing computability will play a secondary role, mostly in an informal manner in the context of classifying 
	the present work within the bigger scope of general computability theory. Yet, \(\mu\)-recursive functions provide a helpful 
	tool in BSS theory, in the sense that they facilitate the characterization of the ``algorithmic'' aspect of BSS machines.
	In \itNr{1} above, we have already stated that the real-valued variants of the constant an projection functions are BSS computable.
	According to \itNr{1} and \itNr{3}, the function \(x\mapsto x+1,\quad x\in\mathbb{R}\), which can be considered a real-valued variant of the successor function,
	is BSS computable, since it essentially is an addition of the trivial projection function and the constant function that maps to \(1\). 
	Furthermore, a function obtained by concatenating BSS computable functions is itself BSS computable, see \itNr{2}. It remains to consider 
	real-valued variants of primitive recursion and unbounded search:
	\begin{itemize}	\item[\itNr{5}]	The set \(\mathcal{BSS}\) is closed with respect to real-valued primitive recursion. That is, for \(n\in\mathbb{N}\) with \(n\geq 2\),
									if \(B_1 : \mathbb{R}^{n-1} \supseteq \rightarrow \mathbb{R}\) and \(B_2 : \mathbb{R}^{n+1} \supseteq \rightarrow \mathbb{R}\) 
									are BSS computable functions, then 
									\begin{align*}	B : \bm{x} \mapsto	\begin{cases}	B_1\big(x_2,\ldots,x_n\big), 				&	\text{if}~x_1 < 1, \\
																						B_2\big(x_1 -1, x_2, \ldots, x_n,			&	\\
																						\qquad B(x_1 - 1, x_2,\ldots, x_n)\big),	&	\text{otherwise},
																		\end{cases}											
									\end{align*}
									is a BSS computable function as well.
					\item[\itNr{6}]	The set \(\mathcal{BSS}\) is closed with respect to real-valued unbounded search. That is, for \(n\in\mathbb{N}\) with \(n\geq 1\),
									if \(B : \mathbb{R}^{n+1} \supseteq \rightarrow \mathbb{R}\) is a BSS computable function,
									then the function
									\begin{align*}	\bm{x} \mapsto \min\big\{ y \in \mathbb{N} : B(\bm{x}, y) = 0\big\}										
									\end{align*}
									with domain
									\begin{align*}	\bm{x} \in \mathbb{R}^n :  \exists y \in \mathbb{N} : 	&\Big(\big(B(\bm{x}, y) = 0\big)~\land \\
																											&\quad \big(\forall z\in\mathbb{N}, z \leq y : \\
																											&\qquad(\bm{x},z)\in\mathrm{dom}(B)\big)\Big)									
									\end{align*}
									is a BSS computable function as well.
	\end{itemize}  
	
	In summary, we obtain the following for the relation between the sets \(\mathcal{TM}\) and \(\mathcal{BSS}\) (without formal proof).
	
	\begin{Lemma}	\label{lem:BSSstrongerTM}
					For \(n\in\mathbb{N}\), let \(f : \mathbb{N}^n \supseteq \rightarrow \mathbb{N}\) be a Turing computable function,
					Then, there exists a BSS computable function \(B_f :\mathbb{R}^n \supseteq \rightarrow \mathbb{R}\) that satisfies
					the following for all \(\bm{x}\in\mathbb{N}^n\):
					\begin{itemize}	\item 	If \(\bm{x} \in \mathrm{dom}(f)\) holds true, then \(B_f(\bm{x}) = f(\bm{x})\) is satisfied.
									\item 	If \(\bm{x} \in \mathbb{N}\setminus\mathrm{dom}(f)\) holds true, then 
											\(\bm{x}\in \mathbb{R}^n\setminus \mathrm{dom}(B_f)\) is satisfied.
					\end{itemize}
					\qed
	\end{Lemma}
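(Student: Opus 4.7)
The plan is to exploit the equality $\mathcal{TM} = \mathcal{C}^*$ mentioned in the excerpt and proceed by structural induction on the construction of $f$ as a $\mu$-recursive function. The skeleton of the argument is that each of the six ``building blocks'' of $\mathcal{C}^*$ (successor, constants, projections, composition, primitive recursion, unbounded search) admits a direct BSS analogue, namely the items \itNr{1}--\itNr{6} stated in the excerpt. So the induction hypothesis will be that for every $\mu$-recursive $g$ already constructed, there exists a BSS function $B_g$ satisfying the two bullet points of the lemma, and the induction step will simply wrap this hypothesis in the corresponding BSS closure property.

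For the base cases I would handle the generators of $\mathcal{C}^*$ explicitly: constants and projections are covered verbatim by \itNr{1}; the successor function $n\mapsto n+1$ is realised by the BSS function $x\mapsto x+1$, which is computable by combining the projection $\bm{x}\mapsto x_1$ and the constant $\bm{x}\mapsto 1$ from \itNr{1} through addition via \itNr{3}. For the inductive step, composition is absorbed by \itNr{2}; primitive recursion of $f$ from $g$ and $h$ is realised by feeding $B_g$ and $B_h$ into the scheme of \itNr{5}, noting that the discriminator $x_1 < 1$ used there correctly separates the base case $x_1=0$ from the recursive case when $x_1 \in \mathbb{N}$; and unbounded search is absorbed by \itNr{6}, which already quantifies over $y\in\mathbb{N}$ and therefore mirrors the Turing $\mu$-operator exactly.

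What has to be checked at every step is the domain condition: on natural-number inputs, $B_f$ must return $f(\bm{x})$ exactly when $f$ converges, and must itself diverge exactly when $f$ diverges. For the generators and for composition and primitive recursion this is immediate, since the BSS recipes are purely algebraic combinations that reproduce the value unconditionally on $\mathbb{N}^n$. The only place where divergence needs a second look is \itNr{6}: the BSS version of unbounded search is defined precisely when there exists $y\in\mathbb{N}$ with $B(\bm{x},y)=0$ and $B$ is defined for all smaller natural $z$. By the induction hypothesis this is exactly the convergence criterion of the Turing $\mu$-operator applied to the original $B$ restricted to $\mathbb{N}$, so divergence behaviours match on $\mathbb{N}^n$.

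The main obstacle, and the reason the statement is listed without a formal proof in the excerpt, is that making the induction watertight requires a careful correspondence between the partial functions on $\mathbb{N}$ produced by the $\mu$-recursive scheme and their BSS extensions on $\mathbb{R}$, in particular ensuring that intermediate BSS functions never accidentally become defined on a natural-number tuple where the corresponding $\mu$-recursive function diverges. I would address this by strengthening the induction hypothesis to the conjunction ``$B_g(\bm{x}) = g(\bm{x})$ whenever $\bm{x}\in\mathrm{dom}(g)\cap\mathbb{N}^n$ and $\bm{x}\notin\mathrm{dom}(B_g)$ whenever $\bm{x}\in\mathbb{N}^n\setminus\mathrm{dom}(g)$,'' which is preserved by each of \itNr{1}--\itNr{6} with only routine verification. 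No restriction on the behaviour of $B_g$ off $\mathbb{N}^n$ is needed, which is what makes the lemma true in the first place.
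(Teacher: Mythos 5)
Your structural induction on the $\mu$-recursive construction of $f$, using \itNr{1}--\itNr{6} as the BSS analogues of the generators and closure operations of $\mathcal{C}^*$, is exactly the argument the paper sketches in the prose immediately preceding the lemma (the paper states the lemma without a formal proof). Your attention to the domain correspondence at the unbounded-search step, and the observation that behaviour off $\mathbb{N}^n$ is irrelevant, correctly fill in the only nontrivial bookkeeping.
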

	
	Recall that according to the Church-Turing Thesis, Turing machines provide an exact model of real-world computability on 
	digital computers. Lemma~\ref{lem:BSSstrongerTM} therefore corresponds to the intuition that BSS machines represent an idealization of 
	real-world computability. Everything that can be computed on a digital computer in the real world can also be computed on a BSS machine in theory.
	
	Using the above properties of the set \(\mathcal{BSS}\), we will now establish the BSS computability of two functions that we need subsequently.
	For \(n\in\mathbb{N}\) and a set \(\mathcal{A} \subseteq \mathbb{R}^n\), we consider the indicator function
	\begin{align*}	\mathds{1}(\cdot|\mathcal{A})	:~	\mathbb{R}^n 	\rightarrow \{0,1\},~
													\bm{x}	\mapsto 	\mathds{1}(\bm{x}|\mathcal{A}) 
																		:=	\begin{cases}	1,	&\text{if}~ \bm{x}\in\mathcal{A},\\
																							0,	&\text{otherwise},
																			\end{cases}
	\end{align*}
	
	\begin{Lemma}	\label{lem:IndNBSS}
					The indicator function \(x \mapsto \mathds{1}(x|\mathbb{N}),~x\in\mathbb{R}\), is BSS computable.
	\end{Lemma}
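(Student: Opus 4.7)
The plan is to reduce the task to the closure properties \itNr{1}--\itNr{6} of $\mathcal{BSS}$. Given $x \in \mathbb{R}$, the idea is first to locate the smallest natural number not smaller than $x$, and then to test whether this number coincides with $x$ itself.

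First, I would use the field operation \itNr{3} together with the projections \itNr{1} to form the BSS computable map $(x,y) \mapsto y - x$. Feeding this into the ordering branch of \itNr{4}, with the two constant outputs $0$ and $1$ supplied by \itNr{1}, produces a BSS computable function $B_{\geq} : \mathbb{R}^2 \to \mathbb{R}$ with $B_{\geq}(x,y) = 0$ whenever $y \geq x$ and $B_{\geq}(x,y) = 1$ otherwise. Invoking real-valued unbounded search \itNr{6} on $B_{\geq}$ then produces
\begin{align*}
\nu : \mathbb{R} \to \mathbb{N}, \quad x \mapsto \min\{n \in \mathbb{N} : n \geq x\}.
\end{align*}
Since $\mathbb{N}$ is unbounded above in $\mathbb{R}$ and $B_{\geq}$ is total, the search terminates for every $x \in \mathbb{R}$, so $\nu$ is an element of $\mathcal{BSS}$ defined on all of $\mathbb{R}$.

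To finish, I would apply \itNr{3} once more to form $x \mapsto \nu(x) - x$ and then branch via the equality form of \itNr{4} on whether this difference is $0$, returning $1$ in that case and $0$ otherwise. Correctness rests on the observation that $\nu(x) = x$ if and only if $x \in \mathbb{N}$: every $x \in \mathbb{N}$ trivially satisfies $\nu(x) = x$, whereas for $x \notin \mathbb{N}$ one always has $\nu(x) > x$, either because $x < 0 \leq \nu(x)$, or because $x > 0$ lies strictly between two consecutive naturals and $\nu(x)$ is the upper one. Concatenating the three stages via \itNr{2} yields the desired indicator as an element of $\mathcal{BSS}$.

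The main conceptual point is that although a BSS machine cannot round its real inputs as a primitive operation, the unbounded search in \itNr{6} supplies exactly the discrete enumeration of $\mathbb{N}$ needed to probe $x$ against the countable target set. Beyond this observation, the construction is a routine application of the closure properties collected in the excerpt, and no genuine technical obstacle is expected.
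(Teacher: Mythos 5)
Your argument is correct and follows essentially the same route as the paper's own proof: unbounded search over $\mathbb{N}$ on a BSS computable test of $n \geq x$ (the paper uses the saturated subtraction $\max\{x-n,0\}$ where you use an explicit branch, but these are interchangeable), yielding the ceiling $\nu(x)=\mathrm{cl}(x)$, followed by an equality test of $\nu(x)-x$ against $0$. No substantive differences.
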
\begin{proof}
					Consider the saturated subtraction \(\bm{x} \mapsto \max\{x_1 - x_2, 0\},~ \bm{x} \in \mathbb{R}\), which is BSS computable
					according to \itNr{3} and \itNr{4}. Performing an unbounded search (starting from \(0\)) on the saturated subtraction according to \itNr{7}, 
					we obtain that the ceiling function,
					\begin{align}	x	\mapsto 	&\min\big\{n\in\mathbb{N}: x \leq n\big\} \nonumber\\
													&= \min\big\{n\in\mathbb{N}: \max\{x_1 - x_2, 0\} = 0\big\} \label{eq:lem:indN}\\
													&= : \mathrm{cl}(x),\nonumber
					\end{align}
					is BSS computable. Observe that for all \(x\in\mathbb{R}\), there exists \(n\in\mathbb{N}\) such that \(x \leq n\) is satisfied.
					Hence, the unbounded search in \eqref{eq:lem:indN} always terminates. We have
					\begin{align*}	\mathds{1}(x|\mathbb{N}) 	=	\begin{cases}	1,	&\text{if}~ x-\mathrm{cl}(x) = 0, \\
																					0,	&\text{otherwise},
																	\end{cases}
					\end{align*}
					for all \(x\in\mathbb{R}\), which, according to \itNr{1}, \itNr{3} and \itNr{4}, is a BSS computable function.
	\end{proof}
	\begin{Lemma}	\label{lem:expN}
					The naturally discretized exponential function \(\bm{x} \mapsto \exp_{\mathbb{N}}(\bm{x}),~\bm{x}\in\mathbb{R}^2\), with
					\begin{align*}		\exp_{\mathbb{N}}(\bm{x}) := x_2^{\max\{n\in\mathbb{N}: n \leq x_1\}}
					\end{align*}
					for \(x_1 \geq 0\), \(\exp_{\mathbb{N}}(\bm{x}) = 1\), otherwise, is BSS computable.
	\end{Lemma}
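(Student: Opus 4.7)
The plan is to realize $\exp_{\mathbb{N}}$ directly as a single primitive recursion in the sense of property \itNr{5}, with no auxiliary floor function or sign test needed.

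I would take the base function $B_1 : \mathbb{R} \to \mathbb{R}$, $B_1(x_2) := 1$, which lies in $\mathcal{BSS}$ by \itNr{1}, and the step function $B_2 : \mathbb{R}^3 \to \mathbb{R}$, $B_2(y_1, y_2, y_3) := y_2 \cdot y_3$, which lies in $\mathcal{BSS}$ by combining the projections from \itNr{1} with multiplication from \itNr{3} via the concatenation rule \itNr{2}. Feeding these into \itNr{5} produces a BSS-computable function $P : \mathbb{R}^2 \to \mathbb{R}$ satisfying
\begin{align*}
    P(x_1, x_2) = \begin{cases} 1, & \text{if}~ x_1 < 1, \\ x_2 \cdot P(x_1 - 1, x_2), & \text{otherwise}. \end{cases}
\end{align*}

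I would then verify that $P$ coincides with $\exp_{\mathbb{N}}$ on all of $\mathbb{R}^2$ by induction on $\max\{n \in \mathbb{N} : n \leq x_1\}$, which is well-founded because each recursive call decreases $x_1$ by $1$ until $x_1 < 1$. For $x_1 < 1$ both expressions equal $1$, and this single case already covers both the branch $x_1 < 0$ of the definition of $\exp_{\mathbb{N}}$ and the subcase $0 \leq x_1 < 1$, in which $\max\{n \in \mathbb{N} : n \leq x_1\} = 0$ and hence $x_2^0 = 1$. For $x_1 \geq 1$, the identity $\max\{n \in \mathbb{N} : n \leq x_1\} = 1 + \max\{n \in \mathbb{N} : n \leq x_1 - 1\}$ matches the multiplicative step of the recursion exactly, so the equality $P = \exp_{\mathbb{N}}$ follows.

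The only potentially subtle point is that primitive recursion in the BSS setting is driven by the real-valued predicate $x_1 < 1$ rather than by the discrete predicate $x_1 \in \mathbb{N}$. This is in fact what makes the argument so clean: non-integer positive reals are silently rounded down, since $x_1 - k < 1$ first holds after exactly $\max\{n \in \mathbb{N} : n \leq x_1\}$ decrements, while negative inputs trigger the base case immediately and return $1$ as required. Consequently, no separate conditional branch on the sign of $x_1$ needs to be spliced in via \itNr{4}, and neither a floor function nor the indicator $\mathds{1}(\cdot|\mathbb{N})$ from Lemma~\ref{lem:IndNBSS} has to be invoked. Once the coincidence $P = \exp_{\mathbb{N}}$ is recorded, BSS computability follows directly from closure of $\mathcal{BSS}$ under primitive recursion.
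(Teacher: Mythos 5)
Your proposal matches the paper's own proof: both express $\exp_{\mathbb{N}}$ via the same primitive-recursion identity, $\exp_{\mathbb{N}}(\bm{x}) = 1$ if $x_1 < 1$ and $\exp_{\mathbb{N}}(\bm{x}) = x_2\exp_{\mathbb{N}}(x_1-1,x_2)$ otherwise, and invoke \itNr{1}, \itNr{2}, \itNr{3}, \itNr{5}. You simply spell out the base and step functions and the coincidence argument more explicitly than the paper, which treats them as immediate.
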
\begin{proof}	
					The naturally discretized exponential function can be obtained almost directly by applying primitive recursion.
					We have
					\begin{align*}	\exp_{\mathbb{N}}(\bm{x}) = 	\begin{cases}	1,										&\text{if}~ x_1 < 1,\\
																					x_2\exp_{\mathbb{N}}(x_1 - 1, x_2),		&\text{otherwise},
																	\end{cases}
					\end{align*} 
					for all \(\bm{x}\in\mathbb{R}^2\), which, according to \itNr{1}, \itNr{2}, \itNr{3} and \itNr{5}, is BSS computable.
	\end{proof}

	For \(n\in\mathbb{N}\), a set \(\mathcal{A}\subseteq \mathbb{R}^n\) is called BSS semi-decidable if it is the domain of a BSS computable function.
	Furthermore, a BSS semi-decidable set \(\mathcal{A}\) is called BSS decidable if the set \(\mathbb{R}^n\setminus\mathcal{A}\) is BSS semi-decidable
	as well. In fact, \(\mathcal{A}\) is BSS semi decidable if and only if the indicator function
	\(\bm{x} \mapsto \mathds{1}(\bm{x}|\mathcal{A})\) is BSS computable. BSS decidable sets exhibit a particular relation to what are know as semi-algebraic sets. 
	This relation will be essential in proving the main results of this work. In the following, we refer to a set \(\mathcal{A}\subseteq \mathbb{R}^n\) as
	polynomially definable if there exists a polynomial \(p : \mathbb{R}^n \rightarrow \mathbb{R}\) such that \(\bm{x}\in\mathcal{A}\)
	holds true if and only if \(p(\bm{x}) > 0\) is satisfied.
	
	\begin{Definition}	For \(n\in\mathbb{N}\), the set \(\mathcal{S}_n\) of semi-algebraic subsets of \(\mathbb{R}^n\) is the smallest set that contains all 
						polynomially definable subsets of \(\mathbb{R}^n\) and is closed with respect to finite intersections, finite unions and complementing.
	\end{Definition}
	
	\begin{Lemma}	\label{lem:SemiBSSRelation}
					For \(n\in\mathbb{N}\), let \(\mathcal{A}\subseteq \mathbb{R}^n\) be semi-algebraic. Then, \(\mathcal{A}\) is BSS decidable.
	\end{Lemma}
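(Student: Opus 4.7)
The plan is to proceed by structural induction on the construction of semi-algebraic sets from the definition, showing at each step that the indicator function of the set under consideration is BSS computable. Recall from the discussion following the definitions that a set $\mathcal{A}\subseteq\mathbb{R}^n$ is BSS decidable if and only if $\mathds{1}(\cdot|\mathcal{A})$ is BSS computable (since $\mathds{1}(\cdot|\mathbb{R}^n\setminus\mathcal{A}) = 1 - \mathds{1}(\cdot|\mathcal{A})$, both indicator functions are simultaneously BSS computable or not).

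For the base case, I would take a polynomially definable set $\mathcal{A} = \{\bm{x}\in\mathbb{R}^n : p(\bm{x}) > 0\}$ for some polynomial $p$. First I would argue that every polynomial in $n$ variables is a BSS computable function, since it is obtained from the constant and projection functions (item \itNr{1}) by finitely many applications of addition and multiplication (item \itNr{3}), combined via composition (item \itNr{2}). Once $p$ is known to be BSS computable, I would realize the indicator function $\bm{x}\mapsto\mathds{1}(\bm{x}|\mathcal{A})$ by branching on the sign of $p(\bm{x})$ via item \itNr{4}, returning the constant $1$ if $p(\bm{x}) > 0$ and $0$ otherwise; this is BSS computable by items \itNr{1}, \itNr{2} and \itNr{4}.

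For the inductive step, I would assume that $\mathcal{A},\mathcal{B}\subseteq\mathbb{R}^n$ are BSS decidable, i.e., that $\mathds{1}(\cdot|\mathcal{A})$ and $\mathds{1}(\cdot|\mathcal{B})$ are BSS computable, and verify closure under the three operations in the definition. For complements, $\mathds{1}(\bm{x}|\mathbb{R}^n\setminus\mathcal{A}) = 1 - \mathds{1}(\bm{x}|\mathcal{A})$ is BSS computable by items \itNr{1} and \itNr{3}. For intersections, $\mathds{1}(\bm{x}|\mathcal{A}\cap\mathcal{B}) = \mathds{1}(\bm{x}|\mathcal{A})\cdot\mathds{1}(\bm{x}|\mathcal{B})$ is BSS computable by items \itNr{2} and \itNr{3}. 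For unions, I would use the identity $\mathds{1}(\bm{x}|\mathcal{A}\cup\mathcal{B}) = \mathds{1}(\bm{x}|\mathcal{A}) + \mathds{1}(\bm{x}|\mathcal{B}) - \mathds{1}(\bm{x}|\mathcal{A})\cdot\mathds{1}(\bm{x}|\mathcal{B})$, which is again BSS computable. Since the semi-algebraic sets form the smallest class containing polynomially definable sets and closed under finite intersections, unions and complements, the claim follows.

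I do not expect a genuine obstacle here: the argument is essentially bookkeeping on top of the closure properties \itNr{1}--\itNr{4}. The only potentially subtle point is recognizing that strict polynomial inequalities (rather than, say, non-strict ones, or equalities $p(\bm{x})=0$) are covered by the ``$>0$'' branch of item \itNr{4}; had equalities been allowed in the definition, one would additionally invoke the second branching primitive of item \itNr{4}, but this does not change the conclusion.
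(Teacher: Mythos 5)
The paper states Lemma~\ref{lem:SemiBSSRelation} without proof, treating it as a known fact from the BSS literature, so there is no authorial argument to compare against. Your structural induction is correct and is essentially the canonical argument one would give: polynomials are BSS computable by \itNr{1}--\itNr{3}, the sign test of \itNr{4} yields the indicator of any polynomially definable set, and the algebraic identities $\mathds{1}(\cdot|\mathbb{R}^n\setminus\mathcal{A}) = 1 - \mathds{1}(\cdot|\mathcal{A})$, $\mathds{1}(\cdot|\mathcal{A}\cap\mathcal{B}) = \mathds{1}(\cdot|\mathcal{A})\,\mathds{1}(\cdot|\mathcal{B})$, and the inclusion--exclusion formula for unions show that the class of sets with BSS computable indicator is closed under complement and finite intersections and unions. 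Since $\mathcal{S}_n$ is by definition the smallest class containing the polynomially definable sets and closed under those operations, the minimality clause finishes the argument with no further work. The one small point worth keeping in mind is that the paper's sentence ``\(\mathcal{A}\) is BSS semi decidable if and only if the indicator function \(\bm{x} \mapsto \mathds{1}(\bm{x}|\mathcal{A})\) is BSS computable'' is evidently a typo for \emph{decidable}, which is the reading you (correctly) adopt.
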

	
	Observe that the converse is not true in general. There exist BSS decidable sets that are not semi-algebraic, an example of which are the natural 
	numbers \(\mathbb{N}\).	
	
	With respect to the alphabets \(\mathcal{X}\) and \(\mathcal{Y}\), consider again an arbitrary code 
	\(\mathfrak{C} := \big\{(\bm{x}_1,\bm{y}_1), \ldots, (\bm{x}_L,\bm{y}_L)\big\} \subseteq \mathcal{X}^\mathrm{N}\times\mathcal{Y}^\mathrm{N}\)
	of cardinality \(|\mathfrak{C}| = L \in \mathbb{N}\). Concatenating all pairs \((\bm{x},\bm{y}) \in \mathfrak{C}\) into
	\begin{align} 	\label{eq:WordvC}
					\bm{v}(\mathfrak{C}) := \bm{x}_1\bm{y}_1\cdots\bm{x}_L\bm{y}_L 
	\end{align}
	yields a word of length \(2\cdot \mathrm{N}\cdot L\) in \((\mathcal{X}\cup\mathcal{Y})^*\), where \((\mathcal{X}\cup\mathcal{Y})^*\)
	denotes the Kleene-closure of \(\mathcal{X}\cup\mathcal{Y}\). The word \(\bm{v}(\mathfrak{C})\) determines the code \(\mathfrak{C}\)
	uniquely, since, provided that \(\mathcal{X} \cap \mathcal{Y} = \emptyset\) is satisfied, there is only one way to decompose 
	\(\bm{v}(\mathfrak{C})\) into a list of subwords \(\bm{x}\bm{y}\in \mathcal{X}^* \circ \mathcal{Y}^*\) of equal length, where \(\mathcal{X}^* \circ \mathcal{Y}^*\)
	denotes the concatenation of \(\mathcal{X}^*\) and \(\mathcal{Y}^*\). In particular, given an arbitrary word \(\bm{v} \in (\mathcal{X}\cup\mathcal{Y})^*\),
	there either exists exactly one such decomposition, in which case \(\bm{v}\) corresponds to a code, or there does not exists such a composition,
	in which case \(\bm{v}\) does not corresponds to a code.
	
	Consider the mapping \(\Sigma : \mathcal{X} \cup \mathcal{Y} \rightarrow \{1,\ldots, |\mathcal{X}|+|\mathcal{Y}|\}\) as defined above. We can 
	uniquely assign each code \(\mathfrak{C}\) a number \(\Gamma(\mathfrak{C})\in\mathbb{N}\), 
	by interpreting the word \(\bm{v}(\mathfrak{C})\) as the base-\((|\mathcal{X}|+|\mathcal{Y}| + 1)\) expansion of \(\Gamma(\mathfrak{C})\), i.e.,
	\begin{align*}	\Gamma(\mathfrak{C}) := \sum_{j=1}^{2\cdot\mathrm{N}\cdot L} \Sigma\big(v_j(\mathfrak{C})\big)\cdot (|\mathcal{X}|+|\mathcal{Y}| + 1)^{j-1}. 
	\end{align*} 
	
	The set of natural numbers that correspond to some code in the above sense is a proper subset of the set of all natural numbers, i.e.,
	the inverse \(\Gamma^{-1}\) is a partial function. However,
	given any number \(n\in\mathbb{n}\), we can always write down its base-\((|\mathcal{X}|+|\mathcal{Y}| + 1)\) expansion and verify
	``by hand'' whether it corresponds to some code or not. Accordingly, we obtain the following lemma (without formal proof):
	
	\begin{Lemma}	\label{lem:TuringNMcode}
					For finite alphabets \(\mathcal{X}\) and \(\mathcal{Y}\) with fixed orderings, 
					the following mappings are Turing computable:
					\begin{align*}	\Theta_{\mathrm{N}}: n 	\mapsto		&	\begin{cases}	\mathrm{N}, 	&\text{if}~ n=\Gamma(\mathfrak{C})~\text{for some 
																											\((\mathrm{N},\mathrm{M})\)-code \(\mathfrak{C}\)},\\
																		0,									&\text{otherwise},
																			\end{cases} \\
									\Theta_{\mathrm{M}}: n 	\mapsto		&	\begin{cases}	\mathrm{M}, 	&\text{if}~ n=\Gamma(\mathfrak{C})~\text{for some 
																											\((\mathrm{N},\mathrm{M})\)-code \(\mathfrak{C}\)},\\
																		0,									&\text{otherwise},
																			\end{cases}
					\end{align*}\qed
	\end{Lemma}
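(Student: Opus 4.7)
The plan is to show that both $\Theta_{\mathrm{N}}$ and $\Theta_{\mathrm{M}}$ can be realized by a finite sequence of primitive-recursive string manipulations and bounded searches, all of which are well known to lie in $\mathcal{TM}$. Throughout, I will rely on the standard fact that the base-$b$ expansion of a natural number and its inverse are Turing computable for any fixed $b\in\mathbb{N}$, and that all elementary operations on finite strings (length, extraction of a sub-block, equality test, counting distinct sub-blocks) are Turing computable.

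First I would compute, on input $n\in\mathbb{N}$, the base-$(|\mathcal{X}|+|\mathcal{Y}|+1)$ expansion $(d_1,\ldots,d_K)$ of $n$; this is a standard primitive-recursive procedure. If any digit equals $0$, the number cannot encode a code via $\Gamma$ and the machine outputs $0$. Otherwise, using the fixed orderings of $\mathcal{X}$ and $\mathcal{Y}$ and the total ordering induced by $\Sigma$, one recovers a word $\bm{v}\in(\mathcal{X}\cup\mathcal{Y})^{K}$ with $\Sigma(v_j)=d_j$ for all $j$. Because $\mathcal{X}\cap\mathcal{Y}=\emptyset$, each symbol $v_j$ is unambiguously classified as belonging to $\mathcal{X}$ or to $\mathcal{Y}$.

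Next I would determine the candidate block length $\mathrm{N}$ by scanning $\bm{v}$ from the left and counting the largest initial run of $\mathcal{X}$-symbols. With $\mathrm{N}$ fixed, the machine verifies that $K=2\mathrm{N}L$ for some $L\in\mathbb{N}$ (bounded division) and that for every $\ell\in\{0,\ldots,L-1\}$ the sub-block $v_{2\mathrm{N}\ell+1}\cdots v_{2\mathrm{N}\ell+\mathrm{N}}$ lies in $\mathcal{X}^{\mathrm{N}}$ and the sub-block $v_{2\mathrm{N}\ell+\mathrm{N}+1}\cdots v_{2\mathrm{N}\ell+2\mathrm{N}}$ lies in $\mathcal{Y}^{\mathrm{N}}$. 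If any of these checks fails, the decomposition required by \eqref{eq:WordvC} does not exist; the machine outputs $0$. Otherwise it extracts the list of pairs $(\bm{x}_1,\bm{y}_1),\ldots,(\bm{x}_L,\bm{y}_L)$ and checks condition 1) of the code definition by a finite double loop: for every $i<j$, test whether $\bm{y}_i=\bm{y}_j$ implies $\bm{x}_i=\bm{x}_j$. This pairwise test is decidable by symbol-wise comparison. If the test fails, the machine outputs $0$.

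If all checks succeed, the decomposition is unique by the argument given in the excerpt and thus $\bm{v}=\bm{v}(\mathfrak{C})$ for exactly one $(\mathrm{N},\mathrm{M})$-code $\mathfrak{C}$. For $\Theta_{\mathrm{N}}$ the machine returns the value $\mathrm{N}$ already computed. For $\Theta_{\mathrm{M}}$ the machine counts the cardinality of $\mathfrak{M}(\mathfrak{C})=\{\bm{x}_1,\ldots,\bm{x}_L\}$ by a bounded loop that iterates through $i=1,\ldots,L$ and increments a counter whenever $\bm{x}_i\neq \bm{x}_j$ for all $j<i$; this returns $\mathrm{M}$. Every step above is either primitive recursion, composition, or bounded search over finite ranges determined explicitly by $n$, so both functions are Turing computable. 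I do not expect a genuine obstacle here; the only point worth stressing in the write-up is the unique recoverability of $\mathrm{N}$ from $\bm{v}$, which is exactly what the disjointness $\mathcal{X}\cap\mathcal{Y}=\emptyset$ together with the uniform-length structure of $\bm{v}(\mathfrak{C})$ guarantees.
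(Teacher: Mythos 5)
Your proof is correct and is essentially the formalization the authors had in mind; the paper states the lemma ``without formal proof'' and gives only the one-line justification that one can ``write down its base-$(|\mathcal{X}|+|\mathcal{Y}|+1)$ expansion and verify `by hand' whether it corresponds to some code or not.'' Your step-by-step reconstruction (base conversion, digit check, recovery of $\bm{v}$, use of $\mathcal{X}\cap\mathcal{Y}=\emptyset$ to pin down $\mathrm{N}$, block verification, pairwise check of condition~1), and counting of distinct message words) is exactly the natural unpacking of that remark, and every step is a bounded search or primitive-recursive string operation, so the argument is sound.

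One small refinement worth recording: since $\bm{v}(\mathfrak{C})$ is defined by listing the \emph{set} $\mathfrak{C}$ once per element, your machine should also reject inputs whose decomposition produces a repeated pair $(\bm{x}_i,\bm{y}_i)=(\bm{x}_j,\bm{y}_j)$ with $i\neq j$; this is another bounded pairwise test and does not change anything structural. The paper's own informal criterion (``there exists a decomposition'') is slightly looser than the actual definition of an $(\mathrm{N},\mathrm{M})$-code on this point, so your write-up is, if anything, more careful than the text you are reconstructing.
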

	
	Recall that by Theorem~\ref{thm:Shannon}, the zero-error capacity of a channel \(W\) is fully determined by the confusability graph \(G(W)\). 
	However, \(G(W)\) is insufficient in order to characterize the set \(\mathcal{C}_0(W)\). This is already evident from the fact that \(G(W)\) does not 
	contain any information about the set \(\Y\). Nevertheless, the exact transition probabilities \(\bm{w}\) are still only
	relevant indirectly with respect to the construction of zero-error codes for a channel \(W\).
	Given any (possibly trivial) subset \(\Omega\subseteq\mathcal{X}\times\mathcal{Y}\), define
	\begin{align}	\label{eq:WZeroOmega}
					\mathcal{W}_0(\Omega) := \Big\{\bm{w}\in\mathcal{W}(\mathcal{X},\mathcal{Y}) : (x,y)\in\Omega \Leftrightarrow \bm{w}_{x,y} = 0 \Big\}
	\end{align}
	and observe that for \(W\in \mathcal{W}_0(\Omega)\), the confusability graph \(G(W)\) can be easily determined from \(\Omega\) and \(\X\times\Y\):
	Two (distinct) vertices \(u\) and \(v\) are adjacent in \(G(W)\) if and only if there exists \(y\in\Y\) such that 
	\begin{align*}	\{(u,y),(v,y)\}\cap \Omega = \emptyset
	\end{align*} 
	holds true. Accordingly, all channels \(W \in \mathcal{W}_0(\Omega)\) have the same zero-error capacity \(C_0(W) = c_0(\Omega)\).
	
	Recall from Section~\ref{sec:PrelZE} that given a channel \(W\in\mathcal{W}(\mathcal{X},\mathcal{Y})\), an \((\mathrm{N},\mathrm{M})\)-code  
	\(\mathfrak{C} \subseteq \mathcal{X}^\mathrm{N}\times\mathcal{Y}^\mathrm{N}\) is, by definition, an element of \(\mathcal{C}_0(W)\)
	if and only if we have \(\mathrm{S}_{\min}(W,\mathfrak{C}) = 1\). By the normalization of probabilities, we obtain 
	\begin{align*}	\mathrm{S}_{\min}(W,\mathfrak{C}) =
					1 - \max_{\bm{x}\in\mathfrak{M}(\mathfrak{C})} \sum_{\substack{\bm{y}\in\mathcal{Y}^{\mathrm{N}}:\\ (\bm{x},\bm{y})\notin \mathfrak{C}}}
					\prod_{j = 1}^{\mathrm{N}} W\Big(y^j|x^j\Big)
	\end{align*}
	from rearranging~\eqref{eq:Smin}. Defining the set
	\begin{align}	\label{eq:AntiCode}	
					\overline{\mathfrak{C}} := 	\Big(\mathfrak{M}(\mathfrak{C})\times\Y^{\mathrm{N}}\Big) 
												\cap \Big(\big(\X^{\mathrm{N}}\times\Y^{\mathrm{N}}\big)\setminus\mathfrak{C}\Big),
	\end{align}
	we have \(\mathrm{S}_{\min}(W,\mathfrak{C}) = 1\) if and only if for all \((\bm{x},\bm{y}) \in \overline{\mathfrak{C}}\), 
	there exists at least one \(j\in \{1,\ldots, \mathrm{N}\}\) such that \(W\big(y^j|x^j\big) = 0\) holds true. This conforms to
	the intuitive idea of a zero-error code: If \(\bm{x}\) is a feasible message and \(\bm{y}\) is any channel output sequence that will not be attributed
	to input \(\bm{x}\) at the receiving end of the channel, the probability of output \(\bm{y}\) actually occurring when \(\bm{x}\) was input to the channel
	has to equal zero. Finally, observe that for \(W \in \mathcal{W}_0(\Omega)\), we have \(W\big(y^j|x^j\big) = 0\) if and only if \(\big(x^j,y^j\big) \in \Omega\)
	holds true. Thus, \(\mathfrak{C}\) is a zero-error code for \(W \in \mathcal{W}_0(\Omega)\) if and only if for all \((\bm{x},\bm{y}) \in \overline{\mathfrak{C}}\), 
	there exists at least one \(j\in \{1,\ldots, \mathrm{N}\}\) such that \(\big(x^j,y^j\big) \in \Omega\) holds true.

	In summary, we arrive at the following lemma (without formal proof):
	
	\begin{Lemma}	\label{lem:TuringDelta}
					For every set \(\Omega \subseteq \X\times\Y\) there exists a (total) Turing computable 
					mapping \(\Delta(\cdot|\Omega) : \NN \rightarrow \{0,1\}, n \mapsto \Delta(n|\Omega)\) such that for all 
					\(W \in \mathcal{W}_0(\Omega)\), we have 
					\begin{align*}	\Delta(n|\Omega) = 	\begin{cases}	1,	&\text{if \(\Gamma(\mathfrak{C}) = n\) for}\\ 
																			&\text{some \(\mathfrak{C}\in\mathcal{C}_0(W)\)}, \\
																		0,	&\text{otherwise}.
														\end{cases}
					\end{align*}\qed
	\end{Lemma}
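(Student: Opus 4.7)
The plan is to exploit the characterization derived in the paragraph immediately preceding the lemma, namely that a code $\mathfrak{C}$ is a zero-error code for any channel $W \in \mathcal{W}_0(\Omega)$ if and only if for every pair $(\bm{x}, \bm{y}) \in \overline{\mathfrak{C}}$ there exists an index $j \in \{1,\ldots,\mathrm{N}\}$ with $(x^j, y^j) \in \Omega$. The crucial point is that this condition is completely independent of the actual transition probabilities $\bm{w}$: it only involves the combinatorial set $\Omega$. Since $\Omega \subseteq \X \times \Y$ is finite and fixed once and for all, membership in $\Omega$ is decided by table lookup, which is trivially Turing computable.

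Given an input $n \in \NN$, the algorithm first invokes the procedure of Lemma~\ref{lem:TuringNMcode} to test whether $n = \Gamma(\mathfrak{C})$ holds for some $(\mathrm{N}, \mathrm{M})$-code $\mathfrak{C}$, and to extract $\mathrm{N}$ and $\mathrm{M}$ if so. If the test fails, the algorithm outputs $0$. Otherwise it reconstructs the base-$(|\X|+|\Y|+1)$ expansion of $n$, applies $\Sigma^{-1}$ digit by digit to obtain $\bm{v}(\mathfrak{C}) \in (\X \cup \Y)^*$, and splits this word into the list of pairs $(\bm{x}_\ell,\bm{y}_\ell)_{\ell=1}^{L}$; this decomposition is unique since $\X \cap \Y = \emptyset$ and $\mathrm{N}$ is known.

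Having recovered $\mathfrak{C}$ explicitly, the algorithm forms the message set $\mathfrak{M}(\mathfrak{C})$ and the finite anti-code $\overline{\mathfrak{C}} = \big(\mathfrak{M}(\mathfrak{C})\times \Y^{\mathrm{N}}\big)\cap \big((\X^{\mathrm{N}}\times \Y^{\mathrm{N}})\setminus \mathfrak{C}\big)$ as in \eqref{eq:AntiCode} by brute-force enumeration. For each $(\bm{x}, \bm{y}) \in \overline{\mathfrak{C}}$, it scans the $\mathrm{N}$ coordinates and checks whether at least one pair $(x^j, y^j)$ belongs to $\Omega$. The output is $1$ if every element of $\overline{\mathfrak{C}}$ passes this check, and $0$ otherwise.

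Totality is immediate: the digit expansion of $n$ is finite, the decomposition check reduces to comparing word length against $2\cdot \mathrm{N}\cdot L$, the anti-code has cardinality at most $|\mathfrak{M}(\mathfrak{C})|\cdot |\Y|^{\mathrm{N}}$, and each membership query in $\Omega$ terminates. Correctness follows directly from the characterization of zero-error codes for channels in $\mathcal{W}_0(\Omega)$ recalled above. There is no real obstacle to overcome here; the substantive content of the lemma is the observation that the dependence of $\mathcal{C}_0(W)$ on $W$ factors through $\Omega$, reducing a question about real-valued probabilities to a question about a finite combinatorial object.
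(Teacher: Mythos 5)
Your proposal is correct and follows essentially the same route as the paper: decode $n$ into a candidate code via Lemma~\ref{lem:TuringNMcode} and the base-$(|\X|+|\Y|+1)$ expansion, build the finite set $\overline{\mathfrak{C}}$ from \eqref{eq:AntiCode}, and verify the combinatorial zero-error criterion coordinate by coordinate against $\Omega$, relying on the observation that this criterion depends on $W$ only through $\Omega$. The paper leaves the lemma without a formal proof and instead sketches exactly this procedure informally in the paragraph following the statement; your write-up is a faithful formalization of that sketch.
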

	Again, Lemma~\ref{lem:TuringDelta} conforms to the intuitive understanding of Turing computability. In the Context of
	Lemma~\ref{lem:TuringNMcode}, we have already established that for \(n\in\NN\), we can verify ``by hand'' whether \(n\)
	corresponds to a feasible code according to \(\Gamma\) and, if so, recover the actual code 
	\(\mathfrak{C} = \big\{(\bm{x}_1,\bm{y}_1), \ldots, (\bm{x}_L,\bm{y}_L)\big\}\) by writing down the base-\((|\mathcal{X}|+|\mathcal{Y}| + 1)\)
	expansion of \(n\). From there, we can directly construct \(\overline{\mathfrak{C}}\) according to \eqref{eq:AntiCode}, since
	all involved sets (namely: \(\mathfrak{C}\), \(\mathfrak{M}(\mathfrak{C})\), \(\X^{\mathrm{N}}\) and \(Y^{\mathrm{N}}\)) are finite 
	sets of finite sequences (or pairs thereof) from finite, discrete alphabets. Finally, as \(\overline{\mathfrak{C}}\) is itself again a 
	finite set of pairs of finite words from finite, discrete alphabets, we can one by one verify for every pair \((\bm{x},\bm{y}) \in \overline{\mathfrak{C}}\)
	whether for at least one \(j\in \{1,\ldots,\mathrm{N}\}\), we have \((x^j,y^j)\in\Omega\). With regards to real-world computation,
	this procedure can easily be implemented in any modern Turing complete programming language (leaving questions of complexity and runtime aside) 
	using only discrete-symbol arithmetics. 
	
\section{Computing the Zero-Error Capacity on Blum-Shub-Smale Machines}	\label{sec:SAUA}
	\noindent Having established the necessary preliminaries on remote state estimation, zero-error coding and computability theory, 
	we will now tackle the first main contribution of this article. In particular, 
	we will investigate if, given a matrix \(\mA\in\RR^{n\times n}\) and finite alphabets \(\X\) and \(\Y\), there exists a BSS algorithm that tells us 
	based on the input \(W\in \mathcal{W}(\X,\Y)\) whether any of the conditions established by Theorem~\ref{thm:SolvabilityRSE} are met.
	In other words, we want to know if the sets \(\S(\mA)\) and \(\U(\mA)\) are BSS decidable for all \(\mA\in\RR^{n\times n}\) and
	all finite alphabets \(\X\) and \(\Y\). With regards to Turing decidability, the analogous question has been answered in the
	negative in~\cite{BoBoDe22a, BoBoDe21}. 
	
	First and foremost, investigating the BSS decidability of \(\S(\mA)\) and \(\U(\mA)\) involves characterizing the BSS computability of Shannon's zero-error 
	capacity, which will occupy a large part of this section. Finally, we will use the obtained results to prove the BSS decidability of \(\S(\mA)\) and \(\U(\mA)\).
	For the remainder of this section, consider again the sets \(\mathcal{W}_0(\Omega)\) as defined by \eqref{eq:WZeroOmega}.

	\begin{Lemma}	\label{lem:Wzero}
					Let \(\Omega\) be any (possibly trivial) subset of \(\mathcal{X}\times\mathcal{Y}\). Then, the set
					\(\mathcal{W}_0(\Omega)\) is semi-algebraic as subset of \(\RR^{|\X||\Y|}\).
	\end{Lemma}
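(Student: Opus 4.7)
The plan is to show $\mathcal{W}_0(\Omega)$ can be written as a finite intersection of polynomially definable sets and complements thereof, from which semi-algebraicity follows directly by the closure properties in the definition of $\mathcal{S}_n$.

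First I would unfold the definition of $\mathcal{W}_0(\Omega)$ into three families of elementary constraints on the coordinates $w_{x,y}$ of $\bm{w}\in\mathbb{R}^{|\mathcal{X}||\mathcal{Y}|}$: \textbf{(i)} nonnegativity, $w_{x,y}\geq 0$ for every $(x,y)\in\mathcal{X}\times\mathcal{Y}$; \textbf{(ii)} the row-sum normalization, $\sum_{y\in\mathcal{Y}}w_{x,y}=1$ for every $x\in\mathcal{X}$, which comes from membership in $\mathcal{W}(\mathcal{X},\mathcal{Y})$; and \textbf{(iii)} the zero-pattern constraint prescribed by $\Omega$, namely $w_{x,y}=0$ whenever $(x,y)\in\Omega$ and $w_{x,y}>0$ whenever $(x,y)\in(\mathcal{X}\times\mathcal{Y})\setminus\Omega$. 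Since the ``$\Leftrightarrow$'' in the definition is enforced once (iii) holds, the constraint in (i) is actually redundant given (iii), but it does no harm to keep it.

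Next I would argue that each of these constraints individually defines a semi-algebraic set. Each strict inequality $w_{x,y}>0$ is of the form $p(\bm{w})>0$ for the coordinate polynomial $p(\bm{w})=w_{x,y}$, so it is polynomially definable. Each equality $q(\bm{w})=0$ with $q$ a polynomial can be written as the complement of $\{q>0\}\cup\{-q>0\}$, so it lies in $\mathcal{S}_n$ by one union and one complement. This takes care of the normalization equations, with $q(\bm{w})=\sum_{y}w_{x,y}-1$, as well as the vanishing conditions $w_{x,y}=0$ for $(x,y)\in\Omega$.

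Finally I would note that $\mathcal{W}_0(\Omega)$ is precisely the intersection of the finitely many semi-algebraic sets constructed above: one set per nonnegativity constraint, one per row-normalization equation, one per $(x,y)\in\Omega$ enforcing $w_{x,y}=0$, and one per $(x,y)\notin\Omega$ enforcing $w_{x,y}>0$. Because $\mathcal{S}_n$ is closed under finite intersections by definition, this intersection is an element of $\mathcal{S}_{|\mathcal{X}||\mathcal{Y}|}$, proving the claim.

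I do not foresee a genuine obstacle here; the entire argument is a bookkeeping exercise to check that each defining condition of $\mathcal{W}_0(\Omega)$ is either polynomially definable or obtained from polynomially definable sets by the operations allowed in the definition of semi-algebraic sets. The only point requiring a tiny bit of care is recognizing that polynomial equalities are semi-algebraic (they are not directly ``polynomially definable'' in the sense used in the excerpt, which uses a strict inequality), but this is resolved by the explicit complement-of-union trick described above.
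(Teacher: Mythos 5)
Your proof is correct, and it takes a genuinely cleaner route than the paper's. The paper introduces an auxiliary family \(\mathcal{W}^\sim_0(\Omega) := \{\bm{w} : (x,y)\in\Omega \Rightarrow w_{x,y}=0\}\) and then tries to recover \(\mathcal{W}_0(\Omega)\) from the single Boolean identity
\(\mathcal{W}_0(\Omega) = \mathcal{W}^\sim_0(\Omega) \cap \bigl(\mathbb{R}^{|\mathcal{X}||\mathcal{Y}|}\setminus\mathcal{W}^\sim_0((\mathcal{X}\times\mathcal{Y})\setminus\Omega)\bigr)\).
That identity is actually false whenever \((\mathcal{X}\times\mathcal{Y})\setminus\Omega\) contains more than one pair: the complement of \(\mathcal{W}^\sim_0((\mathcal{X}\times\mathcal{Y})\setminus\Omega)\) only asserts that \emph{some} off-\(\Omega\) coordinate is nonzero, not that \emph{all} of them are. (For \(\Omega=\emptyset\), the paper's right-hand side is \(\mathbb{R}^{|\mathcal{X}||\mathcal{Y}|}\setminus\{\bm{0}\}\), which plainly contains channels with some vanishing transition probabilities.) Your coordinate-wise decomposition sidesteps this by handling the universal quantifier over \((\mathcal{X}\times\mathcal{Y})\setminus\Omega\) with a finite intersection of single strict-inequality sets \(\{w_{x,y}>0\}\), one per pair, rather than with a single outer complement. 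You also explicitly retain the row-normalization equalities \(\sum_{y} w_{x,y}=1\) that put \(\mathcal{W}_0(\Omega)\) inside \(\mathcal{W}(\mathcal{X},\mathcal{Y})\), whereas the paper's argument quietly drops those once it passes to the singleton computation over all of \(\mathbb{R}^{|\mathcal{X}||\mathcal{Y}|}\). Both proofs rest on the same closure properties of \(\mathcal{S}_n\) (polynomially definable sets, finite intersections, complements); the only difference is how the Boolean combination is organized, and yours is the one that is correct as written.
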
\begin{proof}		
					For \(\Omega\) an arbitrary subset of \(\mathcal{X}\times\mathcal{Y}\), consider first the set 
					\begin{align*}	\mathcal{W}^{\sim}_0(\Omega) := 	\big\{\bm{w}\in\mathcal{W}(\mathcal{X},\mathcal{Y}) : 
																		(x,y)\in\Omega \Rightarrow \bm{w}(x,y) = 0 \big\}.
					\end{align*}
					If \(\mathcal{W}^{\sim}_0(\Omega)\) can be shown to be semi-algebraic for all\linebreak 
					\(\Omega \subseteq \mathcal{X}\times\mathcal{Y}\), i.e., for all possible subsets of \(\mathcal{X}\times\mathcal{Y}\), 
					then \(\mathcal{W}_0(\Omega)\) must be semi-algebraic as well, since it satisfies
					\begin{align*}	\mathcal{W}_0(\Omega) = 	\mathcal{W}^{\sim}_0(\Omega) 
																\cap \Big(\mathbb{R}^{\mathcal{X}\times\mathcal{Y}}\setminus
																\mathcal{W}^{\sim}_0\big(\underbrace{(\mathcal{X}\times\mathcal{Y})\setminus\Omega}_{%
																\subseteq \mathcal{X}\times\mathcal{Y}}\big)\Big).
					\end{align*}
					We will thus continue by proving that \(\mathcal{W}^{\sim}_0(\Omega)\) is semi-algebraic for all subsets 
					\(\Omega\) of \(\mathcal{X}\times\mathcal{Y}\). Observe that for all \(x\in\mathcal{X}\) and all \(y\in\mathcal{Y}\), 
					the singleton subsets \(\{(x,y)\} \subset \mathcal{X}\times\mathcal{Y}\) satisfy
					\begin{align*} 	&\vphantom{\bigg|}\mathcal{W}^{\sim}_0\big(\{(x,y)\}\big) = \\	
									&\quad\bigg(\mathbb{R}^{\mathcal{X}\times\mathcal{Y}}
																	\setminus\Big\{\bm{w}\in\mathbb{R}^{\mathcal{X}\times\mathcal{Y}}: 
																	\bm{w}(x,y) > 0\Big\}\bigg)~ \cap \\
																	&\qquad \bigg(\mathbb{R}^{\mathcal{X}\times\mathcal{Y}}
																	\setminus\Big\{\bm{w}\in\mathbb{R}^{\mathcal{X}\times\mathcal{Y}}: 
																	-\bm{w}(x,y) > 0\Big\}\bigg).
					\end{align*}
					Furthermore, for \(\bm{w}\in\mathbb{R}^{\mathcal{X}\times\mathcal{Y}}\), the mappings \(\bm{x}\mapsto \bm{w}(x,y)\) and 
					\(\bm{x}\mapsto -\bm{w}(x,y)\) are monomials. Hence, the set \(\mathcal{W}^{\sim}_0\big(\{(x,y)\}\big)\) is semi-algebraic. 
					Finally, observe that for \(\Omega\) an arbitrary subset of \(\mathcal{X}\times\mathcal{Y}\), we have
					\begin{align*}	\mathcal{W}^{\sim}_0(\Omega) = \bigcap_{(x,y)\in\Omega} \mathcal{W}^{\sim}_0\big(\{(x,y)\}\big),
					\end{align*}
					which is a finite intersection of semi-algebraic sets (since\linebreak \(\mathcal{X}\times\mathcal{Y}\) is finite by assumption). 
					Hence, \(\mathcal{W}^{\sim}_0(\Omega)\) is semi-algebraic itself, which concludes the proof.
	\end{proof}
	
	\begin{Theorem}	\label{thm:ZeroErrorBSS}
					The zero-error capacity \(C_0 :~ \mathcal{W}(\mathcal{X},\mathcal{Y}) \rightarrow \mathbb{R}_{\hspace{1pt}0}^{+},~ 
					\bm{w} \mapsto C_0(\bm{w})\) is BSS computable.
	\end{Theorem}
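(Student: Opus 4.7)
The plan hinges on two structural observations together with property $\itNr{1}$ (arbitrary real constants are BSS computable). First, $C_0(\bm{w})$ depends on $\bm{w}$ only through its zero-pattern $\Omega(\bm{w}) := \{(x,y)\in\X\times\Y : w_{x,y} = 0\}$. Indeed, as argued in the discussion following \eqref{eq:WZeroOmega}, the confusability graph $G(W)$ is fully determined by $\Omega(\bm{w})$, and by Theorem~\ref{thm:Shannon} the zero-error capacity is fully determined by $G(W)$. Hence $C_0$ factors as $C_0(\bm{w}) = c_0(\Omega(\bm{w}))$ for some well-defined function $c_0$ on the power set of $\X\times\Y$, and on each stratum $\mathcal{W}_0(\Omega)$ the zero-error capacity is constant and equal to $c_0(\Omega)$.

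Second, the family $\{\mathcal{W}_0(\Omega) : \Omega \subseteq \X\times\Y\}$ is finite and partitions $\mathcal{W}(\X,\Y)$. By Lemma~\ref{lem:Wzero}, each $\mathcal{W}_0(\Omega)$ is semi-algebraic in $\RR^{|\X||\Y|}$, and by Lemma~\ref{lem:SemiBSSRelation} its indicator function $\bm{w} \mapsto \mathds{1}(\bm{w}|\mathcal{W}_0(\Omega))$ is BSS computable. The desired algorithm then realises the finite sum
\begin{align*}
C_0(\bm{w}) = \sum_{\Omega \subseteq \X\times\Y} c_0(\Omega) \cdot \mathds{1}(\bm{w}|\mathcal{W}_0(\Omega)),
\end{align*}
which returns the correct value on every $\bm{w} \in \mathcal{W}(\X,\Y)$, because exactly one indicator is non-zero. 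Properties $\itNr{1}$, $\itNr{2}$ and $\itNr{3}$ assemble the constants, indicators, multiplications and additions into a single BSS computable function.

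The main conceptual step is recognising that $C_0$ is piecewise constant on a semi-algebraic stratification of $\mathcal{W}(\X,\Y)$; once this is secured, BSS computability is essentially automatic. The crucial asymmetry to the Turing setting, in which the analogous statement is known to fail, is that property $\itNr{1}$ allows the finitely many real constants $c_0(\Omega)$ to be hard-wired into the program without any effective approximation procedure, so their mere existence as real numbers suffices. I anticipate no technical difficulty beyond a careful verification that the sets $\mathcal{W}_0(\Omega)$ genuinely partition $\mathcal{W}(\X,\Y)$, so that the above sum collapses to a single term on every admissible input.
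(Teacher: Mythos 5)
Your proposal is correct and follows essentially the same route as the paper: factoring $C_0$ through the zero-pattern $\Omega$, using Lemmas~\ref{lem:Wzero} and~\ref{lem:SemiBSSRelation} to obtain BSS-computable indicators of the sets $\mathcal{W}_0(\Omega)$, hard-wiring the finitely many constants $c_0(\Omega)$ via \itNr{1}, and assembling the finite sum $\sum_{\Omega} c_0(\Omega)\,\mathds{1}(\bm{w}|\mathcal{W}_0(\Omega))$ by \itNr{3}. The paper's proof is word-for-word the same stratification argument, down to the intermediate functions $C_0^{\sim}(\cdot|\Omega)$.
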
\begin{proof}
					Observe that for all \(\bm{w}\in\mathcal{W}(\mathcal{X},\mathcal{Y})\), there exists exactly one \(\Omega \subseteq \mathcal{X}\times\mathcal{Y}\) 
					such that \(\bm{w}\in\mathcal{W}_0(\Omega)\) for \(\mathcal{W}_0(\Omega)\) defined in \eqref{eq:WZeroOmega} is satisfied. Furthermore, as discussed 
					in Section~\ref{sec:PrelZE}, we have \(C_0(\bm{w}_1) = C_0(\bm{w}_2) = c_0(\Omega)\) for all \(\bm{w}_1,\bm{w}_2 \in \mathcal{W}_0(\Omega)\). 
					In other words, \(C_0(\bm{w})\) depends exclusively on the set \(\Omega\) corresponding to \(\bm{w}\). 
					
					Observe that for all \(\Omega \subseteq \mathcal{X}\times\mathcal{Y}\) and \(\bm{w}\in\mathbb{R}^{\mathcal{X}\times\mathcal{Y}}\), the mapping 
					\(\bm{w}\mapsto c_0(\Omega)\) is constant in \(\bm{w}\), and thus BSS computable in \(\bm{w}\) according to \itNr{1}. Joining 
					Lemma~\ref{lem:SemiBSSRelation} and Lemma~\ref{lem:Wzero}, we obtain that for all \(\Omega\subseteq\mathcal{X}\times\mathcal{Y}\), 
					the indicator function \(\bm{w} \mapsto \mathds{1}\big(\bm{w}|\mathcal{W}_0(\Omega)\big)\) is BSS computable in \(\bm{w}\) as well. Thus, for all 
					\(\Omega\in\mathcal{X}\times\mathcal{Y}\), the function
					\begin{align*}	C^{\sim}_0(\cdot|\Omega) :~ &\mathbb{R}^{\mathcal{X}\times\mathcal{Y}} \rightarrow \mathbb{R}, \\
																&\bm{w} \mapsto C^{\sim}_0(\bm{w}|\Omega) := c_0(\Omega)\mathds{1}\big(\bm{w}|\mathcal{W}_0(\Omega)\big)
					\end{align*}
					is a product of BSS computable functions, and is thus, according to \itNr{3}, BSS computable as well.
					
					Last but not least, observe that for all \(\bm{w}\in\mathcal{W}(\mathcal{X},\mathcal{Y})\) 
					the zero-error capacity \(C_0 :~ \mathcal{W}(\mathcal{X},\mathcal{Y}) \rightarrow \mathbb{R}_{\hspace{1pt}0}^{+}\) satisfies 
					\begin{align}	\label{eq:ZeroErrorFooI}
									C_0(\bm{w}) = \sum_{\Omega \subseteq \mathcal{X}\times\mathcal{Y}} C^{\sim}_0(\bm{w}|\Omega).
					\end{align}
					The right-hand side of \eqref{eq:ZeroErrorFooI} is a finite sum of functions that are BSS computable in \(\bm{w}\). Hence, 
					according to \itNr{3}, \(C_0 :~ \mathcal{W}(\mathcal{X},\mathcal{Y}) \rightarrow \mathbb{R}_{\hspace{1pt}0}^{+}\) is BSS computable as well.
	\end{proof}
	
	On the algorithmic level, the proof of Theorem~\ref{thm:ZeroErrorBSS} can intuitively be understood as follows. 
	In the programming stage, a lookup-table containing the values \(c_0(\Omega):\Omega \subseteq\mathcal{X}\times\mathcal{Y}\) 
	is hard-coded into the memory of the BSS machine. Later, when executed with input \(\bm{w}\in\mathcal{W}(\mathcal{X},\mathcal{Y})\), the BSS machine
	relates \(\bm{w}\) to the corresponding number \(c_0(\Omega)\) by comparing the individual components of \(\bm{w}\) to zero. Hence, the algorithm
	exploits the ability of the BSS machine to process exact real numbers in two ways: By storing a list of exact real numbers in the machine's memory,
	and by checking which of the input's components are exactly equal to zero. In particular, the values \(c_0(\Omega):\Omega \subseteq\mathcal{X}\times\mathcal{Y}\)
	are part of the algorithm itself. The statement of Theorem~\ref{thm:ZeroErrorBSS} can thus be rephrased as follows: For every fixed pair \((\mathcal{X},\mathcal{Y})\),
	there exists a BSS algorithm that computes the mapping \(C_0 :~ \mathcal{W}(\mathcal{X},\mathcal{Y}) \rightarrow \mathbb{R}_{\hspace{1pt}0}^{+},~ 
	\bm{w} \mapsto C_0(\bm{w})\) in dependence of \(\bm{w}\in\mathcal{W}(\mathcal{X},\mathcal{Y})\). Whether there exists a BSS algorithm that computes 
	the zero-error capacity ``globally'' in dependence of \(\bm{w}\), \(\mathcal{X}\) and \(\mathcal{Y}\) is, to the best of the authors' knowledge, an open question. 
	
	In view of Theorem~\ref{thm:ZeroErrorBSS}, we obtain the following for the sets \(\mathcal{S}_0(\bm{A})\) and \(\mathcal{U}_0(\bm{A})\).
	
	\begin{Theorem}	\label{thm:SAUA}
					For all \(n\in\mathbb{N}\) and all \(\bm{A} \in\mathbb{R}^{n\times n}\) the sets 
					\(\mathcal{S}_0(\bm{A})\) and \(\mathcal{U}_0(\bm{A})\) are BSS decidable.
	\end{Theorem}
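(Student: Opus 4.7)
The plan is to show that the indicator functions of $\mathcal{S}_0(\bm{A})$ and $\mathcal{U}_0(\bm{A})$, viewed as subsets of $\mathbb{R}^{|\mathcal{X}||\mathcal{Y}|}$, are BSS computable; by the characterization in Section~\ref{sec:PreliminariesBSS} this is equivalent to BSS decidability. The strategy combines Theorem~\ref{thm:ZeroErrorBSS}, which carries the genuine technical content, with the elementary closure properties \itNr{1}--\itNr{4} of $\mathcal{BSS}$ and the semi-algebraic description of $\mathcal{W}(\mathcal{X},\mathcal{Y})$.

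First I would observe that $\bm{A}$ is not part of the input to the algorithm but a fixed parameter of the problem instance. Accordingly, the real number
\[
	\eta(\bm{A}) = \sum_{j:|\lambda_j(\bm{A})|\geq 1}\log_2|\lambda_j(\bm{A})|
\]
is a single fixed constant, and by \itNr{1} the map $\bm{w}\mapsto\eta(\bm{A})$ is BSS computable. No eigenvalue extraction is performed at runtime; the constant is hard-coded into the machine, in the same spirit as the finite lookup table appearing in the proof of Theorem~\ref{thm:ZeroErrorBSS}. This is precisely the feature that distinguishes the present BSS setting from the Turing setting where the analogous statement fails.

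Next, on the domain $\mathcal{W}(\mathcal{X},\mathcal{Y})$, Theorem~\ref{thm:ZeroErrorBSS} supplies BSS computability of $\bm{w}\mapsto C_0(\bm{w})$, and \itNr{3} yields BSS computability of $\bm{w}\mapsto C_0(\bm{w})-\eta(\bm{A})$. Applying \itNr{4} with this difference as the sign argument and with the constant functions $1$ and $0$ as the two branches produces the indicator of $\{\bm{w}\in\mathcal{W}(\mathcal{X},\mathcal{Y}):C_0(\bm{w})>\eta(\bm{A})\}$; swapping the sign of the argument gives the analogous indicator associated with $\mathcal{U}_0(\bm{A})$.

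Finally, to extend these indicators from $\mathcal{W}(\mathcal{X},\mathcal{Y})$ to all of $\mathbb{R}^{|\mathcal{X}||\mathcal{Y}|}$, I would note that $\mathcal{W}(\mathcal{X},\mathcal{Y})$ is cut out by the polynomial (in)equalities $w_{x,y}\geq 0$ and $\sum_{y\in\mathcal{Y}}w_{x,y}=1$ and is therefore semi-algebraic, so by Lemma~\ref{lem:SemiBSSRelation} its indicator is BSS computable. Multiplying the two indicators via \itNr{3} produces the full indicator on $\mathbb{R}^{|\mathcal{X}||\mathcal{Y}|}$. I do not expect a genuine obstacle: the hard part was discharged in Theorem~\ref{thm:ZeroErrorBSS}, and what remains here is essentially a single sign test against a hard-coded real constant.
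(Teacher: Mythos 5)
Your proof is correct and follows essentially the same route as the paper's: hard-code the constant $\eta(\bm{A})$ by \itNr{1}, form $C_0(\bm{w})-\eta(\bm{A})$ via Theorem~\ref{thm:ZeroErrorBSS} and \itNr{3}, and apply the branching rule \itNr{4} against constant functions $0$ and $1$ to obtain the indicator. The only addition is your final paragraph extending the indicator from $\mathcal{W}(\mathcal{X},\mathcal{Y})$ to all of $\mathbb{R}^{|\mathcal{X}||\mathcal{Y}|}$ using that $\mathcal{W}(\mathcal{X},\mathcal{Y})$ is semi-algebraic, which the paper leaves implicit but which is a harmless and sensible bit of extra care.
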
\begin{proof}
					Observe that for all \(n\in\mathbb{N}\) and all \(\bm{A} \in\mathbb{R}\), the mapping \(\bm{w} \mapsto \mu(\bm{A})\) is constant in \(\bm{w}\),
					and thus BSS computable according to \itNr{1}. Applying Theorem~\ref{thm:ZeroErrorBSS} and \itNr{3}, so are the mappings
					\(\bm{w} \mapsto C_0(\bm{w}) - \mu(\bm{A})\) and \(\bm{w} \mapsto \mu(\bm{A}) - C_0(\bm{w})\). Finally, considering the constant functions
					\(\bm{w} \mapsto 0\) and \(\bm{w} \mapsto 1\), we obtain that the indicator functions
					\begin{align*}	\mathds{1}\big(\cdot|\mathcal{S}_0(\bm{A})\big) : 	\bm{w} 	&\mapsto	\begin{cases}	1,	&\text{if}~ C_0(\bm{w}) - \mu(\bm{A}) > 0,\\
																															0, 	&\text{otherwise},
																											\end{cases} \\
									\mathds{1}\big(\cdot|\mathcal{U}_0(\bm{A})\big) : 	\bm{w} 	&\mapsto	\begin{cases}	1,	&\text{if}~ \mu(\bm{A}) - C_0(\bm{w}) > 0,\\
																															0, 	&\text{otherwise},
																											\end{cases} 
					\end{align*}
					are BSS computable according to \itNr{1} and \itNr{4}.
	\end{proof}
	
	With regards to the proof Theorem~\ref{thm:ZeroErrorBSS}, we have already discussed that the numbers \(c_0(\Omega):\Omega \subseteq\mathcal{X}\times\mathcal{Y}\)
	are part of the algorithm that computes the mapping \(\bm{w} \mapsto C_0(\bm{w})\) itself. Similarly, with regards to the 
	proof of Theorem~\ref{thm:SAUA} the number \(\eta(\mA)\) is part of algorithm that computes the mapping \(\bm{w} \mapsto \mathds{1}\big(\cdot|\mathcal{S}_0(\bm{A})\big)\),
	\(\bm{w} \mapsto \mathds{1}\big(\cdot|\mathcal{U}_0(\bm{A})\big)\), respectively.
	
\section{Computing Channel Codes on Blum-Shub-Smale Machines} \label{sec:ChCodes} 
	\noindent While the decidability of the sets \(\S(\mA)\) and \(\U(\mA)\) allows in most cases to determine the (un)solvability of the RSE 
	problem in principle, it does not address the question of how to proceed in case the solvability of the RSE problem has been established
	for some \(W\in\mathcal{W}(\X,\Y)\). From an operational point of view, the explicit construction of a suitable pair 
	\((\E,\D)\) given \(W\in\S(\mA)\) is clearly of relevance, and will be addressed in the following.
	
	For \(\mA \in \RR^{n\times n}\) and  \(2^{\mu(\mA)} < M \in \NN\), an explicit pair \((\E,\D)\) with
	\begin{align*}	\E : \big(t, (\rs_{t'})_{t'=1}^t\big) 	&\mapsto \E\big(t, (\rs_{t'})_{t'=1}^t\big)\in \big\{1,\ldots,M\}, \\
					\D : \big(t, (\sigma_{t'})_{t'=1}^t\big) 	&\mapsto \D\big(t, (\sigma_{t'})_{t'=1}^t\big) \in \RR^n,
	\end{align*}
	where \((\sigma_1,\ldots,\sigma_t)\) is a tuple with components from the set \(\{1,\ldots,M\}\), 
	was derived in~\cite{VeEg02,TaMi04}, such that 
	\begin{align*}	\sup_{t\in\NN} \Big\| \rs_t - \D\big(t, \E(1, \rs_1), \ldots, \E(t,\rs_1,\ldots,\rs_t)\big) \Big\| < \infty
	\end{align*}
	is satisfied almost surely. In our context, this scenario corresponds to \(W\) being the \emph{noiseless} channel with
	input and output alphabets \(\X = \Y = \{1,\ldots,M\}\), i.e., the channel with transition probabilities
	\begin{align*}	W(y|x)	=	\begin{cases}	1,	&\text{if}~ x=y, \\
												0,	&\text{otherwise}.
								\end{cases}
	\end{align*}
	Clearly, this channel has zero-error capacity equal to \(\log_2 M\).
	
	Observe that for \(\mA \in \RR^{n\times n}\) and \(t\in\NN\), we have \(\mu(\mA^t) = t\mu(\mA)\). If \(W\in\W(\X,\Y)\)
	satisfies \(\mu(\mA) < C_0(W)\), it is thus always possible to find an \((\mathrm{N},\mathrm{M})\)-code \(\mathfrak{C}\in\mathcal{C}_0(W)\),
	such that \(\mu\big(\mA^{\mathrm{N}}\big) < \log_2(\mathrm{M})\) is satisfied. By merging \(\mathrm{N}\) instances of time each and
	choosing suitable mappings \(\E^{\sim}: \{1,\ldots, \mathrm{M}\} \rightarrow \mathfrak{M}(\mathcal{C})\) and 
	\(\D^{\sim} : \Y^\mathrm{N} \rightarrow \{1,\ldots, \mathrm{M}\}\), we obtain an equivalent setup with plant dynamics \(\mA^{\mathrm{N}}\)
	and the noiseless transmission channel in \(\W\big(\{1,\ldots, \mathrm{M}\},\{1,\ldots, \mathrm{M}\}\big)\). However, as pointed out in~\cite{MaSa07b},
	there exists a noteworthy restriction to this approach: since at every instance of time, the current channel input symbol cannot be affected
	by future states of the plant, the equivalent scenario has to incorporate a transmission delay. In other words, the feasible pairs \((\E,\D)\)
	(in the equivalent scenario) have to be restricted to encoders of the form
	\begin{align}	\label{eq:Erestr}
					\E : \Big(t, (\rs_{t'})_{t'=1}^{t-1}\Big) 	
					&\mapsto \E\Big(t, (\rs_{t'})_{t'=1}^{t-1}\Big)\in \big\{1,\ldots,M\}.
	\end{align}
	Nevertheless, it was further concluded in~\cite{MaSa07b} that the results derived in \cite{MaSa05} allowed to
	modify the scheme presented in \cite{VeEg02,TaMi04} in order to apply to encoders of the form~\eqref{eq:Erestr}.
	Ultimately, the explicit construction of an encoder/decoder-pair \((\E,\D)\) given the channel \(W\in\S(\mA)\) 
	reduces to computing a code \(\mathfrak{C}\) that satisfies \(R_0(W,\mathfrak{C}) > \mu(\mA)\), which we will address in the following.
	
	\begin{Remark}	From an operational point of view, it is strictly speaking also necessary to prove the BSS computability
					of the constructions provided in \cite{VeEg02,TaMi04,MaSa05}, as well as the BSS computability of the constructed mappings
					themselves. In essence, the constructions are canonical and the constructed mappings consist of
					multidimensional quantizers with dynamic range. Other than this, they only involve field and rounding operations on
					\(\RR\). For the sake of brevity, we will restrict ourselves
					to proving the BSS computability suitable zero-error codes. 
	\end{Remark}
	
	\begin{Lemma}	\label{lem:ZECodesBSSFooI}
					For finite alphabets \(\mathcal{X}\) and \(\mathcal{Y}\) with fixed orderings, 
					the following mappings are BSS computable:
					\begin{align*}	\mathrm{N}_0 : (\bm{w},x) 	\mapsto		&	\begin{cases}	\mathrm{N}, 	&\text{if}~ x=\Gamma(\mathfrak{C})~\text{for some}\\
																												&\text{\((\mathrm{N},\mathrm{M})\)-code 
																												\(\mathfrak{C}\in\mathcal{C}_0(\bm{w})\)},\\
																				0,								&\text{otherwise},
																				\end{cases} \\
									\mathrm{M}_0 : (\bm{w},x)	\mapsto		&	\begin{cases}	\mathrm{M}, 	&\text{if}~ x=\Gamma(\mathfrak{C})~\text{for some}\\
																												&\text{\((\mathrm{N},\mathrm{M})\)-code 
																												\(\mathfrak{C}\in\mathcal{C}_0(\bm{w})\)},\\
																				0,								&\text{otherwise},
																				\end{cases}
					\end{align*}
	\end{Lemma}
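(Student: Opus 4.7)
The plan is to exploit the finiteness of the family $\{\mathcal{W}_0(\Omega)\}_{\Omega \subseteq \X \times \Y}$, whose members partition $\mathcal{W}(\X,\Y)$, together with a BSS-lift of the Turing computable machinery already developed in Lemmas~\ref{lem:TuringNMcode} and~\ref{lem:TuringDelta}. The key observation is that for each fixed $\Omega$, the question of whether a given natural number $n$ encodes a zero-error $(\mathrm{N},\mathrm{M})$-code for some $\bm{w} \in \mathcal{W}_0(\Omega)$ is purely combinatorial, and on that piece the answer is produced by the (total) Turing computable functions $\Theta_\mathrm{N}$, $\Theta_\mathrm{M}$ and $\Delta(\cdot|\Omega)$.

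Concretely, for each $\Omega \subseteq \X \times \Y$ I would introduce the Turing computable auxiliaries
\begin{align*}
N_\Omega(n) := \Theta_\mathrm{N}(n) \cdot \Delta(n|\Omega),\qquad M_\Omega(n) := \Theta_\mathrm{M}(n) \cdot \Delta(n|\Omega),
\end{align*}
and lift them to BSS computable functions $B_{N_\Omega}$ and $B_{M_\Omega}$ via Lemma~\ref{lem:BSSstrongerTM}. In parallel, Lemma~\ref{lem:Wzero} shows that every $\mathcal{W}_0(\Omega)$ is semi-algebraic, so Lemma~\ref{lem:SemiBSSRelation} yields the BSS computability of the indicator $\bm{w} \mapsto \mathds{1}(\bm{w}|\mathcal{W}_0(\Omega))$. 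Since the subsets $\mathcal{W}_0(\Omega)$ partition $\mathcal{W}(\X,\Y)$, exactly one such indicator is non-zero for any given $\bm{w}$, and for $x \in \NN$ the identity
\begin{align*}
\mathrm{N}_0(\bm{w}, x) = \sum_{\Omega \subseteq \X \times \Y} \mathds{1}\big(\bm{w}|\mathcal{W}_0(\Omega)\big) \cdot B_{N_\Omega}(x)
\end{align*}
(with the analogous formula for $\mathrm{M}_0$) expresses $\mathrm{N}_0$ as a finite --- because $|\X \times \Y| < \infty$ --- sum of products of BSS computable functions. Properties~\itNr{2} and~\itNr{3} then certify BSS computability of this expression on $\mathcal{W}(\X,\Y) \times \NN$.

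To extend the construction to the full domain $\mathcal{W}(\X,\Y) \times \RR$ and match the ``otherwise'' clause in the statement, I would prefix the finite sum with a case distinction on $\mathds{1}(x|\NN)$, which is BSS computable by Lemma~\ref{lem:IndNBSS}: return the constant $0$ whenever $\mathds{1}(x|\NN) = 0$, and otherwise execute the sum. Property~\itNr{4} then closes the argument for both $\mathrm{N}_0$ and $\mathrm{M}_0$. The main subtlety --- the only point that requires careful sequencing rather than pure symbol-pushing --- is that Lemma~\ref{lem:BSSstrongerTM} guarantees $B_{N_\Omega}$ and $B_{M_\Omega}$ to coincide with their Turing prototypes only on $\NN$ and leaves their behaviour on $\RR \setminus \NN$ entirely unspecified. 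Gating their evaluation through the $\NN$-indicator before forming the sum is therefore essential in order to turn the piecewise Turing construction into a globally BSS computable function in $(\bm{w}, x) \in \mathcal{W}(\X,\Y) \times \RR$.
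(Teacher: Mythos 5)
Your proposal is correct and takes essentially the same route as the paper: partition by $\{\mathcal{W}_0(\Omega)\}$, lift $\Theta_{\mathrm{N}}$, $\Theta_{\mathrm{M}}$, $\Delta(\cdot|\Omega)$ to BSS via Lemma~\ref{lem:BSSstrongerTM}, form the finite indicator-weighted sum over $\Omega$, and gate by $\mathds{1}(\cdot|\NN)$ using $\itNr{4}$. The only cosmetic difference is that you bundle $\Theta\cdot\Delta$ into a single Turing function before lifting, while the paper lifts the three maps separately and multiplies afterward; your explicit remark about the lift being unspecified off $\NN$ is a point the paper handles identically but leaves implicit.
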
\begin{proof}
					Consider the mappings \(\Theta_{\mathrm{N}}\), \(\Theta_{\mathrm{M}}\) and \(\Delta\) specified by Lemmas~\ref{lem:TuringNMcode} and~\ref{lem:TuringDelta}.
					With some abuse of notation, we will indicate their BSS computable equivalents (c.f. Lemma~\ref{lem:BSSstrongerTM}) by passing the argument \(x\in\RR\) 
					instead of \(n\in\NN\). Define the functions
					\begin{align*}	\mathrm{N}_0^{\sim} : (\bm{w},x) 	&\mapsto		\sum_{\Omega\subseteq \X\times\Y} 	\Theta_{\mathrm{N}}(x)\Delta(x|\Omega)
																															\mathds{1}\big(\bm{w}|\W_0(\Omega)\big), \\
									\mathrm{M}_0^{\sim} : (\bm{w},x) 	&\mapsto		\sum_{\Omega\subseteq \X\times\Y} 	\Theta_{\mathrm{M}}(x)\Delta(x|\Omega)
																															\mathds{1}\big(\bm{w}|\W_0(\Omega)\big).
					\end{align*}
					By Lemmas~\ref{lem:SemiBSSRelation} and~\ref{lem:Wzero}, the functions \(\mathds{1}\big(\cdot|\W_0(\Omega)\big)\) are BSS computable. Hence, both 
					\(\mathrm{N}_0^{\sim}\) and \(\mathrm{M}_0^{\sim}\) are finite sums of products of BSS computable functions, and are thus, by \itNr{3}, 
					BSS computable as well. For \(\mathrm{N}_0\) and \(\mathrm{N}_0\) as specified, we obtain
					\begin{align*}	\mathrm{N}_0 (\bm{w},x) 	&=	\begin{cases}	\mathrm{N}_0^{\sim} (\bm{w},x),	&\text{if}~\mathds{1}(x|\mathbb{N})	> 0, \\
																					0,								&\text{otherwise},
																	\end{cases} \\
									\mathrm{M}_0 (\bm{w},x) 	&=	\begin{cases}	\mathrm{M}_0^{\sim} (\bm{w},x),	&\text{if}~\mathds{1}(x|\mathbb{N})	> 0, \\
																					0,								&\text{otherwise}.
																	\end{cases}
					\end{align*}
					Thus, by \itNr{1}, \itNr{4}, and Lemma~\ref{lem:IndNBSS}, \(\mathrm{N}_0\) and \(\mathrm{N}_0\) are both BSS computable, which concludes the proof.
	\end{proof}
	
	\begin{Theorem}	\label{thm:ZeroErrorCodesBSScomp}
					For all \(\mA \in \mathbb{R}^{n\times n}\), the mapping 
					\(\bm{w} \mapsto \min\big\{n\in\mathbb{N} : R_0(\bm{w},\Gamma^{-1}(n)) > \mu(\mA)\big\}\) 
					with domain \(\mathcal{S}_0(\mA)\) is BSS computable.
	\end{Theorem}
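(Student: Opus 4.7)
The plan is to realise the stated mapping as the output of an unbounded search $\itNr{6}$ over $\NN$, applied to a total BSS-computable predicate that recognises exactly those indices $n$ for which $\Gamma^{-1}(n)$ is a zero-error code for $\bm{w}$ with rate strictly above $\mu(\mA)$. The crux is to encode the rate condition without ever invoking a logarithm, which is not primitive in the BSS model.

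First, I would rewrite the inequality $R_0(\bm{w},\Gamma^{-1}(n)) > \mu(\mA)$. Whenever $n$ fails to encode a zero-error code for $\bm{w}$, the definition of $R_0$ forces $R_0 = 0$, hence the inequality can only be satisfied if $n = \Gamma(\mathfrak{C})$ for some $(\mathrm{N},\mathrm{M})$-code $\mathfrak{C} \in \mathcal{C}_0(\bm{w})$. In that case the inequality reads $\tfrac{1}{\mathrm{N}} \log_2 \mathrm{M} > \mu(\mA)$, which is equivalent to $\mathrm{M} > \big(2^{\mu(\mA)}\big)^{\mathrm{N}}$. Because $\mA$ is fixed, $2^{\mu(\mA)}$ is a real constant that may be hard-coded into the algorithm and accessed through $\itNr{1}$, and the power $\big(2^{\mu(\mA)}\big)^{\mathrm{N}}$ can be computed by the naturally discretized exponential of Lemma~\ref{lem:expN}, since $\mathrm{N}_0(\bm{w},n)$ is by construction natural-number-valued.

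Second, I would assemble the total function
\begin{align*}
T(\bm{w}, x) :=
\begin{cases}
0, & \text{if}~ \mathrm{M}_0(\bm{w}, x) - \exp_{\NN}\!\big(\mathrm{N}_0(\bm{w}, x),\, 2^{\mu(\mA)}\big) > 0, \\
1, & \text{otherwise},
\end{cases}
\end{align*}
which is BSS computable by Lemma~\ref{lem:ZECodesBSSFooI}, Lemma~\ref{lem:expN}, together with $\itNr{1}$, $\itNr{3}$ and $\itNr{4}$. The degenerate case in which $n$ does not encode a valid or not a zero-error code is handled uniformly: then $\mathrm{N}_0(\bm{w}, n) = \mathrm{M}_0(\bm{w}, n) = 0$, the test reduces to $0 > 1$ and $T$ returns $1$, as desired. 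Consequently $T(\bm{w}, n) = 0$ if and only if $R_0(\bm{w}, \Gamma^{-1}(n)) > \mu(\mA)$.

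Third, I would apply the unbounded search operator $\itNr{6}$ to $T$, yielding BSS computability of $\bm{w} \mapsto \min\{n \in \NN : T(\bm{w}, n) = 0\}$ on its natural domain. It remains to verify that this domain contains $\mathcal{S}_0(\mA)$, i.e., that the search terminates whenever $\bm{w}\in\mathcal{S}_0(\mA)$. This follows directly from $C_0(\bm{w}) > \mu(\mA)$ and the supremum characterisation $C_0(\bm{w}) = \sup\{R(\mathfrak{C}) : \mathfrak{C} \in \mathcal{C}_0(\bm{w})\}$: some zero-error code of rate exceeding $\mu(\mA)$ must exist, and its index under $\Gamma$ certifies $T(\bm{w}, n) = 0$. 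The main obstacle I anticipate is exactly the absence of $\log_2$ from the BSS instruction set; everything else is a routine composition of Lemma~\ref{lem:ZECodesBSSFooI}, Lemma~\ref{lem:expN} and the closure properties $\itNr{1}$--$\itNr{6}$, and the nontrivial idea is to transfer the logarithm to the other side of the inequality so that what remains is an integer exponentiation discharged by the naturally discretized exponential.
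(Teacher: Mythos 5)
Your proposal is correct and follows essentially the same route as the paper: both eliminate the logarithm by recasting $R_0(\bm{w},\Gamma^{-1}(n)) > \mu(\mA)$ as $\mathrm{M}_0(\bm{w},n) > \exp_{\mathbb{N}}\big(\mathrm{N}_0(\bm{w},n),2^{\mu(\mA)}\big)$, build from this a total BSS-computable $\{0,1\}$-valued test via Lemmas~\ref{lem:ZECodesBSSFooI} and~\ref{lem:expN} and the closure properties \itNr{1}--\itNr{4}, and then apply unbounded search \itNr{6}, with termination on $\mathcal{S}_0(\mA)$ guaranteed by the supremum characterisation of $C_0$. Your $T$ is precisely the paper's $B_{00}$ (with the intermediate $B_0$ inlined), so the two arguments coincide.
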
\begin{proof}
					First, observe that for a zero-error \((\mathrm{N},\mathrm{M})\)-code \(\mathfrak{C}\), we have \(R(\mathfrak{C}) > \mu(\mA)\)
					if and only if 
					\begin{align}	\mathrm{M} > \big(\underbrace{2^{\mu(\mA)}}_{:= \mu^{\sim}(\mA)}\big)^{\mathrm{N}}
					\end{align}
					is satisfied. For \(\bm{w} \in \mathcal{W}(\mathcal{X},\mathcal{Y})\), such a code exists if and only if \(\bm{w} \in \mathcal{S}_0(\mA)\) holds true.
					
					Since \((\bm{w},x) \mapsto \mu^{\sim}(\mA)\) is constant in \((\bm{w},x)\), the mapping
					\begin{align}	(\bm{w},x) \mapsto \exp_{\mathbb{N}}\big(\mathrm{N}_0(\bm{w}, x), \mu^{\sim}(\mA)\big) 
					\end{align}
					is BSS computable according to \itNr{1}, \itNr{2}, and Lemmas~\ref{lem:expN} and~\ref{lem:ZECodesBSSFooI}. Define
					\begin{align*}	B_{0}(\bm{w},x) := \mathrm{M}_0(\bm{w}, x) - \exp_{\mathbb{N}}\big(\mathrm{N}_0(\bm{w}, x), \mu^{\sim}(\mA)\big). 
					\end{align*}
					Then, according to \itNr{3} and Lemma~\ref{lem:ZECodesBSSFooI}, the mapping \((\bm{w},x) \mapsto B_{0}(\bm{w},x)\) is BSS computable as well. Next, define
					\begin{align*}	B_{00}(\bm{w},x) := \begin{cases}	0,	&\text{if}~ B_{0}(\bm{w},x) > 0, \\
																		1,	&\text{otherwise}.
														\end{cases}
					\end{align*}
					Considering the constant functions \((\bm{w},x) \mapsto 0\) and \((\bm{w},x) \mapsto 1\), we obtain that, according to \itNr{4},
					the mapping \((\bm{w},x) \mapsto B_{00}(\bm{w},x)\) is again BSS computable. Last but not least, observe that
					\begin{align*}	&\min\big\{n\in\mathbb{N} : R(\bm{w},\Gamma(n)) > \mu(\mA)\big\} \\	
									&\quad = \min\big\{n\in\mathbb{N} : B_{00}(\bm{w},\Gamma(n)) = 0\big\}
					\end{align*}
					is satisfied for all \(\bm{w}\in\mathcal{W}(\mathcal{X},\mathcal{Y})\) and all \(\mA \in \mathbb{R}^{n\times n}\) (if \(\bm{w}\)
					is not an element of \(\mathcal{S}_0(\mA)\), neither of the two minima exists). Hence, the mapping
					\(\bm{w} \mapsto \min\big\{n\in\mathbb{N} : R(\bm{w},\Gamma(n)) > \mu(\mA)\big\}\) can be computed by unbounded search
					on \(B_{00}\) and is thus, according to \itNr{6}, BSS computable. 
	\end{proof}
	
	The proof of Theorem~\ref{thm:ZeroErrorCodesBSScomp} essentially describes a brute-force search for the first 
	zero-error code \(\mathfrak{C}_{\mA}(W)\) (with respect to index defined by \(\Gamma\)) that has a sufficiently high zero-error
	transmission rate on the channel \(W\). In other words, the mapping
	\begin{align*}	\bm{w} \mapsto \min\big\{n\in\mathbb{N} : R_0(\bm{w},\Gamma^{-1}(n)) > \mu(\mA)\big\}
	\end{align*}
	is computed by successively evaluating the predicate
	\begin{align*}	P(n, \bm{w}) \equiv R_0(\bm{w},\Gamma^{-1}(n)) > \mu(\mA)
	\end{align*}
	(implicitly assuming \(P(n)\) evaluates to '\(\mathsf{FALSE}\)' if \(n\) is not an element of the domain of \(\Gamma^{-1}\))
	and returning the smallest \(n\in\NN\) that makes the predicate evaluate to '\(\mathsf{TRUE}\)'. As indicated above,
	such an \(n\) exists, provided that \(C_0(W) > \mu(\mA)\). By expanding the number 
	\(\min\big\{n\in\mathbb{N} : R_0(\bm{w},\Gamma^{-1}(n)) > \mu(\mA)\big\}\) with respect to basis \(|\X| + |\Y| + 1\),
	we obtain the word \(v\big(\mathfrak{C}_{\mA}(W)\big)\) that uniquely characterizes \(\mathfrak{C}_{\mA}(W)\) in the sense of 
	\eqref{eq:WordvC}.
	
	The difference to Turing machines is subtle: In contrast to Turing machines, a BSS machine is able to 
	evaluate the predicate \(P(n, \bm{w})\) in both of its arguments, i.e., in dependence of \(n\) and \(\bm{w}\).
	Again, this is possible due to the BSS machines ability do store exact real numbers and the sets
	\(\mathcal{W}_0(\Omega)\) being semi-algebraic. A Turing machine, on the other hand, cannot generally evaluate 
	\(P(n, \bm{w})\) in dependence of \(\bm{w}\). In essence, this is a result of the methods applied in~\cite{BoDe21}.
	In \cite{BoScPo21b} the authors investigate in the general computation framework of Blum-Shub-Smale machines which 
	allows the processing and storage of arbitrary reals. They showed that such real number signal processing then 
	enables the detection of DoS attacks.

\section{Summary and Additional Remarks}	\label{sec:Conclusion}
	\noindent In Section \ref{sec:SAUA}, we have proven the existence of a BSS algorithm that computes the zero-error capacity
	\(C_0 : \W \rightarrow \RR_{\hspace{1pt}0}^{+}\) for channels \(W\in\W\). Furthermore, 
	we have used this result to provide a BSS algorithm that decides the sets
	\(\S(\mA)\) and \(\U(\mA)\). For channels \(W\in\S(\mA)\), we have provided a BSS algorithm that computes
	a suitable pair \((\E,\D)\) in Section~\ref{sec:ChCodes}.
	
	The BSS algorithms introduced in Sections~\ref{sec:SAUA} and~\ref{sec:ChCodes} rely essentially on the 
	ability of BSS machines to process exact real numbers. As indicated in Sections \ref{sec:Introduction} and \ref{sec:PreliminariesBSS}, 
	the assumption of being able to perform exact computations within the continuous field \(\RR\) is common in engineering. It is usually made for reasons 
	of simplicity and despite the fact that the limits of real-world (digital) computing are characterized by the discrete Turing model.
	
	In the present work, the RSE problem has been chosen as an example of use due to its inherent relevance to cyber-physical networks,
	which has been discussed in Section \ref{sec:Introduction}. The ``physical'' component in this scenario consists of the unstable LTI plant, which is an analog, 
	continuous system. The latter is true for the physical component of cyber-physical networks in almost all cases. Accordingly, the question arises as to 
	whether an implementation of the ``cyber'' component based solely on digital hardware will generally be sufficient.
	In recent years, we indeed observe a paradigm shift in research and development away from classical digital signal processing and computing hardware.
	Instead, the interest in analog technology experiences a revival, especially in the context of neuromorphic computing. 
	The question of how to model the computational capabilities of these technologies has not been conclusively resolved at the present time.
	In particular, a model for universal analog computing, comparable to the Turing machine for the digital domain, has not yet been widely accepted. 
	Such a characterization of universal neuromorphic computing may be related to the BSS model. In the broadest sense, the latter can be
	interpreted as a model for perfect analog computing, i.e., in disregard of any imperfections or limitations on measurement resolution of real-world analog systems.
	Regarding practical implementations, there currently exist only individual examples of analog computation platforms, for example by Intel, IBM and Samsung.

	As indicated in the introduction, the RSE problem for undisturbed plants and an objective of observability with
	high probability were investigated in \cite{MaSa07a}. The Shannon capacity \(C : \W \rightarrow \RR_{\hspace{1pt}0}^{+}\) 
	was identified as the decisive quantity in this context. Consider a pair \((\E,\D)\) of encoder and estimator that estimates 
	the state of some disturbed plant almost surely. Then, \((\E,\D)\) also estimates the state of the same plant with 
	(arbitrary) high probability, a fortiori, if the plant's state is 
	assumed to evolve free of noisy fluctuations. The converse is not true in general. 
	Thus, mathematically speaking, the objective of almost sure 
	observability for disturbed plants is strictly stronger than the objective of observability with high probability for 
	undisturbed plants. This ``ordering'' is reflected in the relationship between the channel capacities relevant to each objective: we have
	\begin{align*}	C_0(W) \leq  C(W)
	\end{align*}
	for all \(W\in\W\). In \cite{BoBoDe22a}, the sets
	\begin{align*}	\S_{<\epsilon}(\mA) :&= \big\{ W \in \W : C > \eta(\mA)\big\}, \\
					\U_{<\epsilon}(\mA) :&= \big\{ W \in \W : C < \eta(\mA)\big\},
	\end{align*}
	which correspond to the objective of observability with high probability for 
	undisturbed plants, were shown to be decidable on a Turing Machine. 
	As the computational capabilities of a BSS machine strictly exceed those of a Turing Machine, we arrive at a noteworthy observation: 
	the ordering between the different objectives not only reflects 
	in the relevant channel capacities, but also in the computational strength that is required to decide their satisfiability.

	\bibliographystyle{elsarticle-num-names}
	\bibliography{ZeroErrorBSS}
	
\end{document}